\newtheorem{theorem}{Theorem}
\newtheorem{define}{Definition}
\newtheorem{remark}{Remark}
\newcolumntype{L}[1]{>{\raggedright\arraybackslash}p{#1}}
\newcolumntype{C}[1]{>{\centering\arraybackslash}p{#1}}
\newcolumntype{R}[1]{>{\raggedleft\arraybackslash}p{#1}}
\newcommand{\E}{\mathbb{E}}
\newcommand{\multiline}[1]{%
	\begin{tabularx}{\dimexpr\linewidth-\ALG@thistlm}[t]{@{}X@{}}
		#1
	\end{tabularx}
}
\newtheorem{Lem}{Lemma}
\newlength{\figwidth}
\begin{document}
	
	\setlength{\pdfpagewidth}{8.5in}
	\setlength{\pdfpageheight}{11in}
\title{\LARGE Power Allocation for Compute-and-Forward over Fading Channels}

\author{Lanwei Zhang,~\IEEEmembership{Student Member,~IEEE,}
        Jamie Evans,~\IEEEmembership{Senior Member,~IEEE,}
        and~Jingge Zhu,~\IEEEmembership{Member,~IEEE.}
}

\maketitle

\begin{abstract}
Compute-and-forward (CF) is a relaying strategy which allows the relay to decode a linear combination of the transmitted messages. This work studies the optimal power allocation problem for the CF scheme in fast fading channels for maximizing the symmetric computation rate, which is a non-convex optimization problem with no simple analytical or numerical solutions. In the first part of the paper, we investigate the problem when there are finitely many channel states (discrete case). We establish several important properties of the optimal solutions and show that if all users share the same power allocation policy (symmetric policy), the optimal solution takes the form of a water-filling type when the power constraint exceeds a certain threshold. However, if asymmetric policies are allowed, the optimal solution does not take this form for any power constraint. We propose a low-complexity order-based algorithm for both scenarios and compare its performance with baseline algorithms. In the second part of the paper, we state relevant results when the channel coefficients are modelled as continuous random variables (continuous case) and propose a similar low-complexity iterative algorithm for the symmetric policy scenario. Numerical results are provided for both discrete and continuous cases. It is shown that in general our proposed algorithm finds good suboptimal solutions with low complexity, and for some examples considered, finds an exact optimal solution.
\end{abstract}

\begin{IEEEkeywords}
Compute-and-forward, Fading channel, Power allocation, non-convex optimization.
\end{IEEEkeywords}

\section{Introduction}

In a wireless network, the signal sent by a transmitter is not only received by the intended receiver but also by surrounding nodes due to the broadcast nature of the medium. In the same way, the received signal at a node will include both the desired message and the unwanted interference signal. Traditionally, orthogonal transmission schemes are applied to avoid interference so that the receiver only receive the transmitted signal from the desired transmitter. However, these methods will suffer from a diminishing rate, especially when many nodes want to send messages simultaneously \cite{Tse05fundamentalsof}.

If the transmitters and the receivers cannot communicate with each other directly, some intermediate nodes can be used as relays to help their communication. Conventional relay strategies include amplify-and-forward and decode-and-forward where either a simple analog operation is used, or the messages from different transmitters are treated "separately" in a decoding process. The discovery of network coding\cite{Ahlswede00networkcoding} shows that the traditional relay strategies are not sufficient to achieve optimal throughput. To address the same problem over noisy channels, physical-layer network coding (PNC) was proposed in \cite{Zhang06PLNC}. It is a non-orthogonal approach which exploits the physical nature of the electromagnetic waves. The relays will recover a set of functions of the signals from multiple transmitters and then learn the desired messages from the functions. 

Compute-and-forward (CF) is a linear PNC scheme proposed in \cite{Nazer11CF}. It allows the relay to decode an integer linear combination of the messages from multiple transmitters, and either forward it to the next receiver or combine it with other linear combinations to recover individual messages. To achieve this, nested lattice codes are used to guarantee that any integer linear combination of the codewords is still a codeword. Due to the favourable algebraic property of lattice codes, the idea of CF has been applied to different AWGN network models and results in good performances that cannot be achieved otherwise\cite{Nazer11CF,Zhu17,Nazer16,Hong13,Ordentlich17,Zhan14,He18}. Due to the difference between the channel coefficients and the integer coefficients of the linear combinations, the achievable computation rate to decode the linear combination will have a loss. One can allocate different transmission powers to different transmitters in order to match the channel coefficients to the integers if the channel state information (CSI) is known to the transmitters. When considering fading channels, where the channel coefficients are modelled as random variables (RV), one should further decide judiciously whether the transmission should happen and how much power should be allocated for different channel states to maximize the average transmission rate. However, since the rate expression of the CF scheme is non-smooth and non-convex, the resulting optimization problem is thus non-convex. In this paper, we will explore the characteristics of the optimal solutions to this non-convex problem and discuss several low-complexity algorithms.

\subsection{Related work}
There is a rich collection of work on physical-layer network coding and the compute-and-forward scheme in many network scenarios. It has been shown in \cite{Zhang06PLNC} that for a two-user two-way relay network, the PNC scheme has nearly twice the throughput of the orthogonal scheme. It is achieved by decoding the sum of the transmitted signals at the relay and sending out the sum. Since both users know their own messages, they can learn the message of the other user using the sum. 

By using the CF scheme, the relay will recover the integer linear combination of the transmitted signals. After collecting enough independent combinations, the receiver can recover all the individual messages. Following this idea, the integer forcing linear receiver has been proposed in \cite{Zhan14}. This linear receiver has low complexity and significantly outperforms conventional linear architectures such as zero forcing. It can also apply to the multiple-input multiple-output (MIMO) channel with no coding across transmit antennas while still achieving the optimal diversity-multiplexing tradeoff. The original CF scheme is expanded in \cite{Nazer16,Zhu17}, which allows unequal rates among the transmitters. In both papers, the authors show that their proposed low-complexity scheme can achieve the channel capacity for the Gaussian multiple-access channels (MAC) given high signal-to-noise ratio (SNR). Apart from the Gaussian MAC, CF is also applied to the random access Gaussian channel with a relatively large number of active users\cite{Ordentlich17}. It is shown that the energy-per-bit required by each user is significantly smaller than the popular solutions such as treating interference as noise. Moreover, the reverse-CF scheme has been proposed for the Gaussian broadcast channels as the duality of the MAC\cite{Hong13,He18}. In this scheme, a precoded process is required at the base station. In \cite{zhang24}, a CF-based joint coding scheme for Gaussian fast fading multiple access channels with CSIR only has been proposed and its ergodic capacity achievability has been discussed. 

To fully exploit the advantage of the CF scheme, the coefficients of the decoded linear combinations should be determined carefully. For example, to recover the messages with the same amount of linear combinations as the amount of users, we should decide all the combinations are linearly independent. The integer coefficients should also be adapted to the channel state. Optimizing these coefficients is a Shortest Lattice Vector (SLV) problem. In \cite{Wei12}, the authors design the coefficients by Fincke-Pohst-based candidate set searching algorithm and network coding system matrix constructing algorithm, with which the transmission rate of the multi-source multi-relay system is maximized. It has also been shown in \cite{Sahraei14} that this SLV problem from the CF scheme can be solved in low polynomial complexity and an explicit deterministic algorithm that is guaranteed to find the optimal solution has been proposed. 

Another aspect of this CF scheme is to understand the power allocation among the different users in the network. As discussed in the previous subsection, if CSI is available at the transmitters, allocating the transmitted powers can match the channel state with the integer coefficients, which will give a better rate. To this end, the authors in \cite{Tran14} propose an iterative method to optimize the achievable rate over the integer coefficients and the power allocation alternatively in a multi-user multi-relay network with the CF scheme. However, in the power allocation phase, the Lagrange method is applied, which is a sub-optimal approach for the non-convex problem. We will see the gap between the optimal approach and this approach in the later sections. Another network model is considered in\cite{Soussi14}, which includes two users, a single relay and one receiver. The relay helps the transmission from the users to the receiver while there are direct links between both users and the receiver. The CF scheme is applied at both the relay and the receiver. Then the receiver will learn a linear combination directly from both users and receive another linear combination from the relay. The power allocation problem over the users is then addressed. The authors formulate the problem by approximated geometric programming, which is sub-optimal due to the approximation. Another power allocation problem involving CF discussed in \cite{Chen15} applies a time division protocol in a relay network, where only the first phase applies the CF scheme. The optimization problem focuses on the time fraction and the power allocation among the relays. 

Among all the CF power allocation papers, the optimal solutions are not discussed in detail. More importantly, none of the existing papers studies the CF model with multiple transmitters and one receiver in a fading scenario, which is the basic building block in any CF-based schemes. Under the traditional orthogonal scheme, the classical water-filling algorithm provides an optimal power allocation policy over fading channels \cite{Goldsmith97}. However, as we will show later, more complex power allocation policies are required to optimize the computation rate under the CF scheme. In particular, the optimization problem is in general a non-convex problem with a non-smooth objective function. As a result, most off-the-shelf optimization algorithms (for example, \texttt{fmincon} in MATLAB) are strictly sub-optimal. The problem does not admit a simple optimal solution either numerically or analytically, which might explain why there is no prior work on this particular problem although the CF scheme has received a lot of attention in the field. 

\subsection{Paper contributions}

In this paper, we consider a multiple access fading channel, where the receiver requires a linear combination of the messages with predetermined equation coefficients. To the best of our knowledge, this is the first paper that discusses the power allocation problem for CF over both users and fading channels. We formulate the power allocation problem under the original CF scheme where users are equipped with lattice codes of the same rate. In general, the problem is nontrivial to solve due to its non-convexity. The main contributions of this paper are as follows.
\begin{itemize}
    \item For the case with finitely many channel states, we establish several important properties of the optimal solutions of this non-convex problem. 
    \item A simplified power allocation scheme where all users share the same allocation policy (called symmetric policy) is studied. It is shown that when the power constraint exceeds a certain threshold, the optimal symmetric policy takes the form of a water-filling type solution. We propose a low complexity order-based algorithm to calculate the power allocation scheme for all values of power constraints.
    \item We extend the proposed algorithm to calculate a general power allocation policy where different users can have different power allocation policies. 
    \item We state relevant results for the problem when the channel coefficients are modelled as continuous random variables (continuous case). We propose a similar low-complexity iterative algorithm for the symmetric policy for the continuous case.
    \item Numerical results are provided for both discrete case and continuous case. The performance of the proposed algorithms is compared with several baseline algorithms, including an off-the-shelf algorithm (\texttt{fmincon} in MATLAB). In general, our proposed algorithm finds good suboptimal solutions with low complexity, and for some examples, it finds an exact optimal solution.
\end{itemize}

\section{Problem formulation}\label{SM}

We consider an $L$-user Gaussian multiple access channel (MAC) with real-valued channel coefficients $h_1,\ldots,h_L$ and the additive noise at the receiver is assumed to have zero mean and unit variance. 

\subsection{CF computation rate}
Let $\textbf{w}_1,\ldots,\textbf{w}_L$ drawn from the vector space $\mathbb{F}_p^k$ represent the messages from $L$ users, where $p$ is a primal number. The encoded signals to be transmitted are denoted as $\textbf{x}_1,\ldots,\textbf{x}_L$ correspondingly, where $\textbf{x}_l\in\mathbb{R}^n,l = 1,\ldots,L$, and $n$ is the codeword length. The transmission powers per dimension are $P_1,\ldots,P_L$ for each user respectively. The receiver wants to recover a linear combination over the finite field, i.e., $\bigoplus_{l=1}^L q_l \textbf{w}_l$, where $q_l$ are the coefficients taking values in $\mathbb{F}_p$. It has been shown in \cite{Nazer11CF} that with nested lattice codes and lattice decoding, this linear combination can be recovered by decoding an integer linear combination of the transmitted signals, i.e., $\sum_{l=1}^L a_l \textbf{x}_l$, where $q_l = g(a_l\mod p)$, and $g$ is the function mapping the elements in $\{0,1,\ldots,p-1\}$ to the corresponding elements in $\mathbb{F}_p$. Based on the original CF scheme in \cite{Nazer11CF}, a generalized scheme in \cite{Zhu17} introduced an extension to the lattice coding scheme, characterized by parameters $\boldsymbol{\beta} = (\beta_1,\ldots,\beta_L)^T$ to allow unequal computation rates for different users. In this case, the achievable computation rate of each user $l\in\{1,\ldots,L\}$ with channel coefficients $\textbf{h} = (h_1,\ldots,h_L)^T$ and decoding coefficients $\textbf{a} = (a_1,\ldots,a_L)^T$ is given by 
\begin{equation}\label{eq_rate_expression_apar_l}
    R_{comp}^{(l)}(\textbf{h},\textbf{a},\textbf{p},\boldsymbol{\beta}) = \frac{1}{2}\log_2^+\left(\beta_l^2\left(||\tilde{\textbf{a}}||^2-\frac{((\sqrt{\textbf{p}}\circ\textbf{h})^T\tilde{\textbf{a}})^2}{1+||\sqrt{\textbf{p}}\circ\textbf{h}||^2}\right)^{-1}\right) \text{ bpcu},
\end{equation}
where $\circ$ is the Hadamard product, $\tilde{\textbf{a}} = \textbf{a}\circ\boldsymbol{\beta}$, $\sqrt{\textbf{p}} = (\sqrt{P_1},\ldots,\sqrt{P_L})^T$, $f^+(\cdot) = \max(0,f(\cdot))$ and bpcu is short for bits per channel use. For brevity, we will assume all rate expressions have the same unit bpcu and use $\log(\cdot)$ to stand for $\log_2(\cdot)$ in the rest of the paper unless otherwise specified. 

\subsection{Computation rate in the fading channel}
In this work, we are interested in understanding the above computation rates in a fading scenario. To start with a simpler problem and keep the fairness of the users in the transmission, we will assume $\beta_l = 1$ for $l=1,\ldots,L$. Then every user has the same achievable computation rate given by 
\begin{equation}\label{eq_rate_expression_apsr}
    R_{comp}(\textbf{h},\textbf{a},\textbf{p}) = \frac{1}{2}\log^+\left(\left(||\textbf{a}||^2-\frac{((\sqrt{\textbf{p}}\circ\textbf{h})^T\textbf{a})^2}{1+||\sqrt{\textbf{p}}\circ\textbf{h}||^2}\right)^{-1}\right).
\end{equation}

Following a standard model for wireless channels (see, e.g. \cite[Chapter 23]{Gamal2011}), we assume a fast fading channels where the coherence time is smaller than the codeword length and we also assume the transmitters and the receiver have the channel state information (CSI). It is assumed that the channel coefficients $h_l$'s are independent for each user $l$. Moreover, we assume the decoding coefficient $\textbf{a}$ is pre-determined and known to all the transmitters and the receiver. This means the decoding purpose is a given integer linear combination. We will then study the achievable computation rate defined in Definition~\ref{def_average_rate} with a power allocation policy denoted by $\textbf{p}(\textbf{h}) = (P_1(\textbf{h}),\ldots,P_L(\textbf{h}))^T$. 

\begin{define}\label{def_average_rate}
For a fading MAC with channel gain $\textbf{\emph{h}}$ where $\textbf{\emph{h}}$ is a random vector, the achievable computation rate with a power allocation policy $\textbf{\emph{p}}(\textbf{\emph{h}})$ and decoding coefficient \textbf{\emph{a}} is given by
\begin{equation}\label{eq_rate_expression_apsr_pa}
    R_{comp}^{fading}(\textbf{\emph{h}},\textbf{\emph{a}},\textbf{\emph{p}}(\textbf{\emph{h}})) = \E_{\textbf{\emph{h}}}\left[\frac{1}{2}\log^+\left(\left(||\textbf{\emph{a}}||^2-\frac{((\sqrt{\textbf{\emph{p}}(\textbf{\emph{h}})}\circ\textbf{\emph{h}})^T\textbf{\emph{a}})^2}{1+||\sqrt{\textbf{\emph{p}}(\textbf{\emph{h}})}\circ\textbf{\emph{h}}||^2}\right)^{-1}\right)\right].
\end{equation}
\end{define}

The basic idea to achieve rate (\ref{eq_rate_expression_apsr_pa}) is to construct different codebooks for different channel states. Similar to the process in \cite[Chapter 7.4]{Gamal2011}, we construct independent codebooks for each sub-channels, where the rate of the codebooks are matched to the computation rate of each sub-channel. The messages are split and encoded by the corresponding codebooks and the codewords are sent depending on the sub-channels. The decoder will decompose the received sequence according to the sub-channels and do the lattice decoding. Based on the argument in \cite{Nazer11CF}, with any message rate no larger than the achievable computation rate of each sub-channel, the error probability to decode each sub-sequence becomes arbitrarily small when the sequence length grows large enough. Thus, the total error probability tends to zero as the sequence length approaches infinity. In addition, with CSIT, the transmitters can also allocate the transmission power according to the sub-channels. The users are allowed to have their own power allocation policy while they are assumed to have the same average transmitted power constraint $\bar P$, i.e., $\E_{\textbf{h}}[P_l(\textbf{h})] \leq \bar P,\forall l \in \{1,\ldots,L\}$. Then the problem of maximizing the achievable computation rate for the fading MAC can be formulated as follows.
\begin{equation}\label{problem_general_apsr}
    \begin{split}
        \max_{\textbf{\textbf{p}}(\textbf{h})} \quad & R_{comp}^{fading},\\
        s.t. \quad & \E_{\textbf{h}}[P_l(\textbf{h})] \leq \bar P,\; l = 1,\ldots,L,\\
        & P_l(\textbf{h}) \geq 0,\; l = 1,\ldots,L.
    \end{split}
\end{equation}

To fully solve this problem, we need to find $L$ power allocation policies, one for each user. The power policies cannot be simply decoupled due to the structure of the optimization problem. Moreover, the expression in (\ref{eq_rate_expression_apsr_pa}) is non-convex and non-smooth in the power policy $\textbf{p}(\textbf{h})$, which makes the optimization problem even more difficult if we consider its optimal solution. In the next section, we will first look at the discrete case where $\textbf{h}$ will take a finite number of values.

\begin{remark}
    We note that the formulated power allocation problem may not fully exploit the CF scheme with CSIT and the limitations are summarized in the following points.
    \begin{itemize}
        \item We only consider equal computation rate in this case. Besides the reason that this assumption makes the problem tractable, it is a more sensible choice from a fairness point of view. If we allow different computation rates and say, maximize the sum computation rate, it often happens that assigning non-zero rate to only one user  (hence effectively no computation involved) is optimal, making the problem non-interesting.
        \item We assume that the coefficient a is given. It is possible to extend the problem where the coefficient $\textbf{\emph{a}}$ can be chosen (optimized) by the receiver with additional optimization procedures. This is not explored in the current paper.
        \item The strategy is based on separate coding for each sub-channel and does not consider jointly coding on all channel uses, for example, constructing codebooks across all sub-channels (see \cite{zhang24}). 
    \end{itemize}
    
    However, we will see later in this paper, the problem formulated with equal rate expression and only power optimisation is already a non-trivial problem. Especially, we focus on analysing the optimal solution of the CF power allocation problem, which is not thoroughly studied in the previous literature. The derived results for this simplified case provide important insights on the properties of the optimal solutions and connect the classical water-filling algorithm to this non-convex problem. We believe these results are instructive for a more general problem formulation, and importantly, the proposed order-based algorithms can be used as a framework or building block when studying more general problems.
\end{remark}

\section{Discrete case}

We first assume $h_1,\ldots,h_L$ are discretely distributed and independent real random variables. The channel $h_i$, $i = 1,\ldots,L$, takes value $h_{ij}$ with probability $p_{ij}$, $j = 1,\ldots,l_i$. There are in total $M = \prod_{i=1}^L l_i$ channel states denoted as $\textbf{h}_1,\ldots,\textbf{h}_M \in \mathbb{R}^L$, and their probability $f_m$ can be calculated by multiplying the corresponding $p_{ij}$'s. The target optimization problem is given by
\begin{equation}\label{primal_problem_discrete_apsr}
    \begin{split}
        DP1:\quad\max_{\textbf{P}} \quad & \sum_{m=1}^M f_m R_{comp}(\textbf{h}_m,\textbf{a},\textbf{p}_m),\\
        s.t. \quad & \sum_{m=1}^M f_m P_{lm} \leq \bar P, \; l = 1,\ldots,L,\\
        & P_{lm} \geq 0,\quad l = 1,\ldots,L,\; m = 1,\ldots,M,
    \end{split}
\end{equation}
where $\textbf{P} = (\textbf{p}_1,\ldots,\textbf{p}_M)$, $\textbf{p}_m = (P_{1m},\ldots,P_{Lm})^T, m = 1,\ldots,M$, $P_{lm}$ is the transmission power for user $l$ at channel state $m$ and $R_{comp}$ is defined in (\ref{eq_rate_expression_apsr}). We further define $R_m(\textbf{p}_m) = \frac{1}{2}\log\left(\left(||\textbf{a}||^2-\frac{((\sqrt{\textbf{p}_m}\circ\textbf{h}_m)^T\textbf{a})^2}{1+||\sqrt{\textbf{p}_m}\circ\textbf{h}_m||^2}\right)^{-1}\right)$, so that $R_{comp}(\textbf{h}_m,\textbf{a},\textbf{p}_m) = R_m^+(\textbf{p}_m),\forall m$.

\subsection{Theoretical analysis of the optimization problem}
\subsubsection{\textbf{General properties of the optimal solution}}
The problem $DP1$ is a non-convex problem with a non-smooth objective function, which does not admit a simple solution either analytically or numerically. The following definition and auxiliary results help us to understand the properties of the optimal solutions. Note that all norms in the paper refer to the $\ell_2$ norm if it is a vector (instead of a function). We call a power allocation policy a feasible policy if it satisfies the power constraints in (\ref{primal_problem_discrete_apsr}).

\begin{define} (Active set) The active set $\mathcal{A}_{\textbf{\emph{P}}}$ of a feasible solution $\textbf{\emph{P}}$ of the problem $DP1$ is the set of channel states in which at least one user transmits with positive power, i.e., $\mathcal{A}_{\textbf{\emph{P}}} = \left\{m\in\{1,\ldots,M\}\Big|||\textbf{\emph{p}}_m|| > 0 \right\}$.
\end{define}

\begin{define} (Optimal active set)
Given an optimal solution $\textbf{\emph{P}}^*$, we use $\mathcal{A}_{\textbf{\emph{P}}^*}$ to denote the optimal active set associated to this solution. For simplicity we also use $\mathcal{A}^*$ to denote an optimal active set with the understanding that there is an associated optimal solution. 
\end{define}

\begin{remark}(The uniqueness of the optimal solution)\\
In some cases, the global optimal solution as well as the optimal active set is not unique. One example is provided as follows. Assume both users have two possible channel coefficients $0.5$ and $1$ with equal probability. There are totally four possible channel states, namely $\textbf{\emph{h}}_1=(0.5,0.5)^T, \textbf{\emph{h}}_2=(0.5,1)^T,\textbf{\emph{h}}_3=(1,0.5)^T,\textbf{\emph{h}}_4=(1,1)^T$ with equal probability $0.25$. We set the computational coefficients $\textbf{\emph{a}} = (1,1)^T$ and power constraint $\bar P = 2$. The optimal active set(s) will among all the $2^4 = 16$ possible active sets. It can be verified that when the active set $\mathcal{A} = \{2,4\}$ and $\{3,4\}$, the power policies give the same optimal average rate ($0.4102$). Their power policies are given by $\textbf{\emph{P}} = \left(\begin{array}{cccc}
    0 & 3.3896 & 0 & 4.6105 \\
    0 & 2.5853 & 0 & 5.4145
\end{array}\right)$ and $\left(\begin{array}{cccc}
    0 & 0 & 2.5853 & 5.4145 \\
    0 & 0 & 3.3896 & 4.6105
\end{array}\right)$, respectively. Thus, $\{2,4\}$ and $\{3,4\}$ are both the optimal active sets in this case, and the optimal solution is not unique.
\end{remark}

Given an active set $\mathcal{A} = \mathcal{A}_{\textbf{P}}$ for some feasible power allocation policy $\textbf{P}$, the channel states outside $\mathcal{A}$ (with zero power) have no contribution to the average rate. However, the channel states inside $\mathcal{A}$ (with positive power) do not guarantee a positive rate because the function $R_m(\textbf{p})$ can be negative given some nonzero $\textbf{p}$. In other words, some channel states may consume certain positive power without contributing to the average rate if the power policy is not constructed carefully. Note that this may happen for some feasible power allocation policies while not possible for an optimal policy, as shown in the following lemma.

\begin{Lem}\label{lem_optimal_active_set}
If $\mathcal{A}^*$ is an optimal active set of $DP1$ and $\textbf{\emph{P}}^*$ is the corresponding solution, then all the channel states in $\mathcal{A}^*$ have positive rate, i.e., $R_m(\textbf{\emph{p}}_m^*) > 0, \forall m \in \mathcal{A}^*$.
\end{Lem}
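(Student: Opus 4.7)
The plan is to argue by contradiction: assume $\textbf{P}^*$ is optimal for $DP1$ with active set $\mathcal{A}^*$, yet there exists a state $m^* \in \mathcal{A}^*$ with $R_{m^*}(\textbf{p}^*_{m^*}) \leq 0$, and then exhibit a feasible policy with strictly larger objective, contradicting the optimality of $\textbf{P}^*$.

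The first step is to extract slack from the alleged optimal solution. Since $R^+_{m^*}(\textbf{p}^*_{m^*}) = \max\{0,R_{m^*}(\textbf{p}^*_{m^*})\} = 0$, the state $m^*$ contributes nothing to the objective; meanwhile, because $m^* \in \mathcal{A}^*$, at least one user $l^*$ has $P^*_{l^*m^*} > 0$, burning a positive share $f_{m^*}P^*_{l^*m^*}$ of user $l^*$'s average budget on a state that produces no rate. Defining $\textbf{P}'$ by setting $\textbf{p}'_{m^*} := \textbf{0}$ and $\textbf{p}'_m := \textbf{p}^*_m$ for $m \neq m^*$ yields a feasible policy with the same objective value but strict slack of at least $f_{m^*}P^*_{l^*m^*}$ in user $l^*$'s constraint.

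The second step is to spend this slack on a state with strictly positive rate so as to produce a strict improvement. The case where the optimal value is $0$ is vacuous (one can simply take $\mathcal{A}^* = \emptyset$ via the all-zero policy), so one may assume some $m_0 \in \mathcal{A}^* \setminus \{m^*\}$ has $R_{m_0}(\textbf{p}^*_{m_0}) > 0$. I would then increase $P_{l^*m_0}$ by a small amount $\epsilon > 0$ within the available slack; by continuity of $R_{m_0}$, the positive-part operation at $m_0$ remains inactive for small $\epsilon$, so the change in the global objective is, to first order, $\epsilon f_{m_0}\cdot \partial R_{m_0}/\partial P_{l^*m_0}$ evaluated at $\textbf{p}^*_{m_0}$.

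The main obstacle is verifying that this directional derivative can be made strictly positive for some admissible pair $(l^*, m_0)$. The idea is a KKT-style argument applied at $\textbf{P}^*$: stationarity with respect to the user-$l^*$ budget forces $\partial R_{m_0}/\partial P_{l^*m_0} \geq 0$ at any pair with $R_{m_0}(\textbf{p}^*_{m_0}) > 0$ and $P^*_{l^*m_0}>0$, and the closed form of (\ref{eq_rate_expression_apsr}) lets one rule out the boundary case of equality by direct algebraic manipulation on the numerator/denominator of the log. If the infinitesimal argument proves delicate at some $m_0$, a finite-step fallback --- moving the whole block $f_{m^*}P^*_{l^*m^*}$ to a single well-chosen state (for instance one whose $\textbf{h}_{m_0}$ aligns most favourably with $\textbf{a}$) and verifying improvement directly from the closed-form rate --- should serve as a backup, delivering the required strict improvement and hence the contradiction.
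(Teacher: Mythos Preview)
Your overall contradiction strategy matches the paper's: free the power at a zero-rate active state, then reinvest it at a positive-rate state. The divergence, and the gap, is in how you reinvest.

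You increase a \emph{single} coordinate $P_{l^*m_0}$ and try to argue that $\partial R_{m_0}/\partial P_{l^*m_0}>0$ via a KKT-style claim. But in the asymmetric problem this partial derivative can be strictly negative even when $R_{m_0}(\textbf{p}^*_{m_0})>0$: pushing one user's power up can worsen the alignment of $\sqrt{\textbf{p}_{m_0}}\circ\textbf{h}_{m_0}$ with $\textbf{a}$ and drive the rate down. For instance, with $L=2$, $\textbf{a}=(1,1)^T$, $\textbf{h}_{m_0}=(1,1)^T$ and $\textbf{p}_{m_0}=(9,1)^T$, one has $R_{m_0}=\tfrac{1}{2}\log(11/6)>0$ yet $\partial R_{m_0}/\partial P_{1m_0}<0$. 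Your KKT appeal does not rescue this: you have not established that user $l^*$ actually has $P^*_{l^*m_0}>0$ at some positive-rate state $m_0$ (nothing rules out user $l^*$ spending its entire budget on zero-rate states), nor that the relevant multiplier is strictly positive; and since the objective is non-smooth and non-convex, stationarity is not available without further work. The ``finite-step fallback'' is too vague to fill the gap.

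The paper sidesteps all of this by reinvesting along the \emph{full} power vector at a positive-rate state: it sets $\textbf{p}_j'=\alpha\,\textbf{p}_j^*$ with $\alpha>1$. Because $\alpha\mapsto \alpha x/(1+\alpha y)$ is strictly increasing for $x,y>0$, the rate $R_j$ is strictly increasing in $\alpha$ regardless of how the individual coordinates of $\textbf{p}_j^*$ are configured. This monotonicity under joint scaling is the clean observation your argument is missing; it replaces the delicate first-order/KKT reasoning with a one-line inequality and delivers the strict improvement directly.
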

\begin{proof}
Assume there is an optimal active set $\mathcal{A}^*$ with the corresponding optimal power allocation policy $\textbf{P}^*$ such that it includes a channel state $i$ with $R_i(\textbf{p}_i^*) \leq 0$. Since $i\in \mathcal{A}^*$, we have $||\textbf{p}_i^*|| > 0$. Denote the average rate in this case as $R_{avg}$. We now show that there exists another power allocation policy that gives a higher average rate. First we set the power allocation policy of channel $i$ to be zero, i.e., $\textbf{p}_i^\prime = \textbf{0}$. Then we re-allocate $\textbf{p}_i^*$ to another channel state $j\in \mathcal{A}^*$ with $R_j(\textbf{p}_j^*) > 0$. We construct its new power allocation policy $\textbf{p}_j^\prime$ such that $\textbf{p}_j^\prime = \alpha \textbf{p}_j^*$ with some $\alpha>1$. Notice that such an $\alpha$ exists as long as $(\alpha-1)\textbf{p}_j^* \leq \textbf{p}_i^*$ in an element-wise sense, which keep the power constraint satisfied. Thus, 
\begin{align*}
    R_j(\textbf{p}_j^\prime) & = \frac{1}{2}\log\left(\left(||\textbf{a}||^2-\frac{((\sqrt{\textbf{p}_j^\prime}\circ\textbf{h}_j)^T\textbf{a})^2}{1+||\sqrt{\textbf{p}_j^\prime}\circ\textbf{h}_j||^2}\right)^{-1}\right)\\
    & = \frac{1}{2}\log\left(\left(||\textbf{a}||^2-\frac{\alpha((\sqrt{\textbf{p}_j^*}\circ\textbf{h}_j)^T\textbf{a})^2}{1+\alpha ||\sqrt{\textbf{p}_j^*}\circ\textbf{h}_j||^2}\right)^{-1}\right) \\
    & \stackrel{(a)}{>} \frac{1}{2}\log\left(\left(||\textbf{a}||^2-\frac{((\sqrt{\textbf{p}_j^*}\circ\textbf{h}_j)^T\textbf{a})^2}{1+ ||\sqrt{\textbf{p}_j^*}\circ\textbf{h}_j||^2}\right)^{-1}\right) = R_j(\textbf{p}_j^*)
\end{align*}
The inequality $(a)$ holds since $\frac{\alpha x}{1+\alpha y}>\frac{x}{1+ y}$ for $x,y>0$ and $\alpha>1$. Denote the average rate in this case as $R_{avg}^\prime$. Comparing with the previous rate $R_{avg}$, we have 
\begin{equation*}
    R_{avg}^\prime - R_{avg} = \sum_{m=i,j} f_m (R_m^+(\textbf{p}_m^\prime)-R_m^+(\textbf{p}_m^*)) > 0
\end{equation*}
This shows $\textbf{P}^*$ is not optimal hence Lemma~\ref{lem_optimal_active_set} is proved.
\end{proof} 

Given Lemma~\ref{lem_optimal_active_set}, the next Lemma shows that $DP1$ can be simplified if the optimal active set is known.

\begin{Lem}\label{lem_simple_prob}
Given an optimal active set $\mathcal{A}^*$, the corresponding optimal power allocation policy $\textbf{\emph{P}}^*$ satisfies $\textbf{\emph{p}}_m^* = \textbf{0}$ for all $m\notin\mathcal{A}^*$. Furthermore, the optimal solution $\textbf{\emph{p}}_m^*$ with $m\in\mathcal{A}^*$ is the solution to the following optimization problem with $\mathcal{S} = \mathcal{A}^*$.
\begin{equation}\label{convex_problem_discrete_apsr}
    \begin{split}
        DP2: \quad \max_{\{\textbf{\emph{p}}_m\}_{m\in \mathcal{S}}} \quad &  \sum_{m\in \mathcal{S}}
        f_m R_m(\textbf{\emph{p}}_m)
        ,\\
        s.t. \quad & \sum_{m\in\mathcal{S}} f_m P_{lm} \leq \bar P, \; l = 1,\ldots,L,\\
        & P_{lm} \geq 0,\quad l = 1,\ldots,L,\; m \in \mathcal{S},
    \end{split}
\end{equation}
where $\mathcal{S}$ is the support set of $DP2$, and $\mathcal{S}\subseteq\{1,\ldots,M\}$.
\end{Lem}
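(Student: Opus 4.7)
The statement breaks into two parts: (i) $\textbf{p}_m^* = \textbf{0}$ for every $m \notin \mathcal{A}^*$, and (ii) the restriction $\{\textbf{p}_m^*\}_{m\in\mathcal{A}^*}$ solves the reduced problem $DP2$ with $\mathcal{S} = \mathcal{A}^*$. Part (i) is immediate: the definition of the active set reads $\mathcal{A}_{\textbf{P}^*} = \{m : \|\textbf{p}_m^*\| > 0\}$, so $m \notin \mathcal{A}^*$ forces $\|\textbf{p}_m^*\| = 0$, hence $\textbf{p}_m^* = \textbf{0}$. I would record this in one line at the start of the proof.

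For part (ii), my plan is to show that $\{\textbf{p}_m^*\}_{m\in\mathcal{A}^*}$ is both feasible for $DP2$ and attains its maximum. Feasibility is essentially a bookkeeping check: the sum constraints $\sum_{m=1}^M f_m P_{lm}^* \leq \bar P$ of $DP1$ reduce to $\sum_{m\in\mathcal{A}^*} f_m P_{lm}^* \leq \bar P$ once the zero terms from $m \notin \mathcal{A}^*$ are dropped, matching exactly the $DP2$ constraints with $\mathcal{S} = \mathcal{A}^*$. The nonnegativity constraints carry over verbatim.

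The optimality step is where Lemma~\ref{lem_optimal_active_set} does the real work. By that lemma, $R_m(\textbf{p}_m^*) > 0$ for every $m \in \mathcal{A}^*$, so $R_m^+(\textbf{p}_m^*) = R_m(\textbf{p}_m^*)$ on the active set, and combined with the zero contributions from $m\notin\mathcal{A}^*$ (where $\textbf{p}_m^* = \textbf{0}$, giving $R_m^+(\textbf{0}) = 0$), the $DP1$ value of $\textbf{P}^*$ equals the $DP2$ objective evaluated at $\{\textbf{p}_m^*\}_{m\in\mathcal{A}^*}$. I would then argue by contradiction: suppose some feasible $\{\hat{\textbf{p}}_m\}_{m\in\mathcal{A}^*}$ for $DP2$ strictly improves the $DP2$ objective. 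Extend it to a policy on all $M$ channel states by setting $\hat{\textbf{p}}_m = \textbf{0}$ for $m \notin \mathcal{A}^*$. This extension is feasible for $DP1$, and its $DP1$ objective satisfies
\begin{equation*}
    \sum_{m\in\mathcal{A}^*} f_m R_m^+(\hat{\textbf{p}}_m) \;\geq\; \sum_{m\in\mathcal{A}^*} f_m R_m(\hat{\textbf{p}}_m) \;>\; \sum_{m\in\mathcal{A}^*} f_m R_m(\textbf{p}_m^*),
\end{equation*}
where the first inequality uses $R_m^+ \geq R_m$ pointwise. The right-hand side equals the $DP1$ value of $\textbf{P}^*$, contradicting its optimality.

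There is no real obstacle beyond handling the $R_m^+$ versus $R_m$ switch carefully; the main point is that Lemma~\ref{lem_optimal_active_set} is exactly what lets one drop the $^+$ on the optimizer (so that the $DP2$ objective value agrees with the $DP1$ objective value there), while the inequality $R_m^+ \geq R_m$ guarantees that any putative improvement in $DP2$ transports upward, not downward, into $DP1$.
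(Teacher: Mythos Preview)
Your proposal is correct and follows essentially the same approach as the paper: both use the definition of the active set for part (i), invoke Lemma~\ref{lem_optimal_active_set} to replace $R_m^+$ by $R_m$ on $\mathcal{A}^*$, and drop the zero contributions outside $\mathcal{A}^*$ to reduce to $DP2$. Your version is actually more carefully written than the paper's, which omits the explicit lifting/contradiction step you spell out; that extra step is a welcome clarification but not a different route.
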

\begin{proof}
If we know an optimal active set, from Lemma~\ref{lem_optimal_active_set}, $R_m(\textbf{p}_m^*) > 0$, hence $R_m^+(\textbf{p}_m^*) = R_m(\textbf{p}_m^*)$ for $m \in \mathcal{A}^*$. Thus, we can release the $\max(0,\cdot)$ function for those channel $m$. Note that for other channel state $n\notin\mathcal{A}^*$, we have $\textbf{p}_n^* = \textbf{0}$ and $R_n^+(\textbf{p}_n^*) = 0$. Therefore we can further exclude those channels in the optimization problem, which leads to $DP2$. 
\end{proof}

Compared to the target problem $DP1$, the problem $DP2$ has a smooth objective function which is in general easier to deal with. If we can identify the optimal active set $\mathcal{A}^*$, we can solve $DP1$ by solving $DP2$. However, it turns out that finding the optimal set $\mathcal{A}^*$ is a highly non-trivial problem. In order to motivate the idea of how to determine the optimal active set, we will first look at a special case in the next section.

\subsubsection{\textbf{Guarantee for the symmetric power allocation policy}}
In this subsection, we consider a special case where all users share the same power allocation policy, i.e., $\textbf{p}_m = P_m\times\textbf{1}_L$ for $m=1,\ldots,M$, where $\textbf{1}_L$ denotes a $L$-dimensional vector with all entries equaling one and $P_m$ is the common allocated power for all users at channel state $m$. This simplification might also be of practical interest because the power allocation policy is easier to compute. In this case, the function $R_m(\textbf{p}_m)$ is given by $R_m(\textbf{p}_m) = R_m(P_m) = \frac{1}{2}\log\frac{1+P_m ||\textbf{h}_m||^2}{ ||\textbf{a}||^2 + P_m \left( ||\textbf{h}_m||^2||\textbf{a}||^2 - (\textbf{h}_m^T\textbf{a})^2\right)}, m = 1,\ldots,M$. The target problem $DP1$ will be simplified to 
\begin{equation}\label{primal_problem_discrete}
    \begin{split}
        DP1^{(s)}:\quad\max_{P_1,\ldots,P_M} \quad & \sum_{m=1}^M f_m R_m^+(P_m),\\
        s.t. \quad & \sum_{m=1}^M f_m P_m \leq \bar P,\\
        & P_m \geq 0,\quad m = 1,\ldots,M.
    \end{split}
\end{equation}
If the optimal active set is known, we have the following problem corresponding to the problem $DP2$ in Lemma~\ref{lem_simple_prob} by setting $\mathcal{S}$ to be the optimal active set.
\begin{equation}\label{convex_problem_discrete}
    \begin{split}
        DP2^{(s)}: \quad \max_{\{P_m\}_{m\in \mathcal{S}}} \quad &  \sum_{m\in \mathcal{S}}
        f_m R_m(P_m)
        ,\\
        s.t. \quad & \sum_{m\in \mathcal{S}} f_m P_{m} \leq \bar P, \\
        & P_{m} \geq 0,\quad l = 1,\ldots,L.
    \end{split}
\end{equation}

Note that the above problem is convex and satisfies Slater's condition. Thus, the optimal solution can be given by Karush–Kuhn–Tucker (KKT) conditions\cite{Boyd04ConvexOptimization}. We present its optimal solution in an analytical form in Lemma~\ref{lem_optimal_solution_simple}. Note that we say two nonzero vectors $\textbf{x}_1,\textbf{x}_2$ are collinear if they satisfy $\textbf{x}_1 = c \textbf{x}_2$ for some real number $c$.

\begin{Lem}\label{lem_optimal_solution_simple}
The optimal solutions $P_m^*$ of the problem $DP2^{(s)}$ is given by
\begin{equation}\label{eq_KKT_solution}
    P_m^* = (P_m^{KKT}(\lambda^*))^+,m\in \mathcal{S}
\end{equation}
with some $\lambda^*>0$ satisfying the equation
\begin{equation}\label{eq_sum_constraint}
    \sum_{m\in\mathcal{S}} f_m P_m^* = \bar P.
\end{equation}
$P_m^{KKT}$ is defined as
\begin{equation}\label{eq_P_KKT}
    P_m^{KKT}(\lambda) = \Bigg\{\begin{array}{ll}
        \frac{1}{\lambda}-\frac{1}{||\textbf{\emph{h}}_m||^2}, & \textbf{\emph{h}}_m \text{ is collinear with } \textbf{\emph{a}} \\
        \frac{ - b_m + \sqrt{b_m^2 - 4 d_m c_m(\lambda)}}{2d_m}, & \textbf{\emph{h}}_m \text{ is not collinear with } \textbf{\emph{a}}
    \end{array}
\end{equation}
with $d_m = ||\textbf{\emph{h}}_m||^2 (||\textbf{\emph{h}}_m||^2||\textbf{\emph{a}}||^2-(\textbf{\emph{h}}_m^T\textbf{\emph{a}})^2)$, $b_m = 2||\textbf{\emph{h}}_m||^2||\textbf{\emph{a}}||^2-(\textbf{\emph{h}}_m^T\textbf{\emph{a}})^2$, $c_m(\lambda) = ||\textbf{\emph{a}}||^2 - \frac{(\textbf{\emph{h}}_m^T\textbf{\emph{a}})^2}{\lambda}$.
\end{Lem}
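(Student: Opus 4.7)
The plan is to exploit the stated convexity of $DP2^{(s)}$ and Slater's condition, and derive the optimal solution by grinding through the Karush-Kuhn-Tucker (KKT) system. First I would introduce multipliers $\lambda \geq 0$ for the average power constraint and $\mu_m \geq 0$ for the non-negativity constraints, and write the Lagrangian
\begin{equation*}
\mathcal{L} = \sum_{m\in\mathcal{S}} f_m R_m(P_m) - \lambda\Big(\sum_{m\in\mathcal{S}} f_m P_m - \bar P\Big) + \sum_{m\in\mathcal{S}} \mu_m P_m.
\end{equation*}
Stationarity $\partial\mathcal{L}/\partial P_m = 0$ together with complementary slackness $\mu_m P_m^* = 0$ yield, for each $m \in \mathcal{S}$, $R_m'(P_m^*) \leq \lambda^*$ with equality whenever $P_m^* > 0$. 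Hence $P_m^*$ is the positive part of the (unique) root of $R_m'(P) = \lambda^*$, i.e. $P_m^* = (P_m^{KKT}(\lambda^*))^+$, which already captures the structure claimed in (\ref{eq_KKT_solution}). That $\lambda^* > 0$ and the budget constraint is binding (\ref{eq_sum_constraint}) then follows from the fact that $R_m$ is strictly increasing on $(0,\infty)$ whenever $\mathbf{h}_m\not\perp\mathbf{a}$ (which is the case on $\mathcal{S}$ by Lemma~\ref{lem_optimal_active_set}).

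The second step is to solve $R_m'(P) = \lambda$ explicitly. Setting $A_m := \|\mathbf{h}_m\|^2$ and $B_m := \|\mathbf{h}_m\|^2\|\mathbf{a}\|^2 - (\mathbf{h}_m^T\mathbf{a})^2 \geq 0$ (Cauchy-Schwarz), a direct differentiation and the identity $A_m\|\mathbf{a}\|^2 - B_m = (\mathbf{h}_m^T\mathbf{a})^2$ collapse the stationarity equation (after absorbing the factor $2\ln 2$ into $\lambda$) to
\begin{equation*}
\lambda\,(1+P A_m)(\|\mathbf{a}\|^2 + P B_m) = (\mathbf{h}_m^T\mathbf{a})^2.
\end{equation*}
When $\mathbf{h}_m$ is collinear with $\mathbf{a}$ we have $B_m = 0$, the equation is linear in $P$, and solving gives the first branch $1/\lambda - 1/\|\mathbf{h}_m\|^2$ of (\ref{eq_P_KKT}). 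Otherwise $B_m > 0$, and expanding the product produces the quadratic $d_m P^2 + b_m P + c_m(\lambda) = 0$ with precisely the coefficients announced in the lemma statement. Since $b_m > 0$ and $d_m > 0$, selecting the root with the $+$ sign is forced by positivity of $P$, yielding the second branch of (\ref{eq_P_KKT}).

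The main obstacle I anticipate is the bookkeeping rather than any deep difficulty: one must carefully verify the algebraic reduction to the quadratic (so that $b_m$, $c_m$, $d_m$ match the stated expressions), check that the $+$ branch is indeed the relevant root, and confirm that the dual multiplier $\lambda^*$ is pinned down uniquely by (\ref{eq_sum_constraint}). The last point reduces to monotonicity: each map $\lambda \mapsto (P_m^{KKT}(\lambda))^+$ is continuous and non-increasing on $(0,\infty)$, strictly decreasing wherever it is positive, so the total-power function $\lambda \mapsto \sum_{m\in\mathcal{S}} f_m (P_m^{KKT}(\lambda))^+$ sweeps $(0,\infty)$ bijectively onto an interval containing $\bar P$ (assuming $\bar P > 0$ is small enough that the active set $\mathcal{S}$ is consistent with positive powers). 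This fixes $\lambda^*$ and closes the proof. Because the KKT conditions are both necessary and sufficient for convex problems satisfying Slater's condition, the constructed $P_m^*$ is the (unique) optimal solution of $DP2^{(s)}$.
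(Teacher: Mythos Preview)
Your proposal is correct and follows essentially the same approach as the paper: both arguments verify convexity, write down the Lagrangian, solve the stationarity condition $R_m'(P)=\lambda$ (linear when $\mathbf{h}_m$ is collinear with $\mathbf{a}$, quadratic otherwise), discard the negative root, and take the positive part. Your treatment is slightly more systematic in introducing explicit multipliers $\mu_m$ for the non-negativity constraints and in arguing monotonicity for the uniqueness of $\lambda^*$, whereas the paper handles these points more informally; the only small slip is that you invoke Lemma~\ref{lem_optimal_active_set} to justify $\mathbf{h}_m\not\perp\mathbf{a}$ on $\mathcal{S}$, but that lemma concerns the optimal active set of $DP1$ rather than a generic support set of $DP2^{(s)}$---the paper sidesteps this by tacitly assuming $\mathbf{h}_m^T\mathbf{a}\neq 0$.
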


\begin{proof}
We will first show the problem $DP2^{(s)}$ is a convex problem. Firstly, the constraint space is an intersection of halfspaces, thus convex. Secondly, the second derivative of $R_m(P_m)$ is given by
\begin{equation}
    R_m^{''}(P_m) = - \frac{(\textbf{h}_m^T\textbf{a})^2(||\textbf{h}_m||^2||\textbf{a}||^2+(1+2P_m||\textbf{h}_m||^2)(||\textbf{h}_m||^2||\textbf{a}||^2-(\textbf{h}_m^T\textbf{a})^2))}{(||\textbf{a}||^2+P_m||\textbf{h}_m||^2)^2(1+P_m||\textbf{h}_m||^2)^2} < 0, 
\end{equation}
for any $P_m \geq 0$, which implies $R_m(P_m)$ is strictly concave. Here we use Cauchy–Schwarz inequality to show that $||\textbf{h}_m||^2||\textbf{a}||^2-(\textbf{h}_m^T\textbf{a})^2 \geq 0$. Then the objective function of $DP2^{(s)}$ is strictly concave since it is a linear combination of $R_m(P_m)$ with all positive coefficients $f_m$, for $m \in \mathcal{S}$. Therefore, $DP2^{(s)}$ is a convex optimization problem.

The Lagrangian function of the objective of the problem $DP2^{(s)}$ is given by
\begin{equation}
    L(\textbf{P},\lambda) = \left(\sum_{m\in \mathcal{S}} -f_m R_m(P_m)\right) + \lambda \left(\sum_{m\in \mathcal{S}} f_m P_m - \bar P \right) 
\end{equation}
with complementary constraint $P_m\geq 0$, where $\lambda\geq 0$ is the Lagrange multiplier. By setting $\frac{\partial L(\textbf{P},\lambda)}{\partial P_m} = 0$, we obtain
\begin{equation}
    -f_mR_m^{'}(P_m) + \lambda f_m = 0.
\end{equation}
Calculating $R_m^{'}(P_m)$ and plugging it into the above equation, we have
\begin{equation}\label{eq_op_sol_T3_fraction}
    \frac{(\textbf{h}_m^T\textbf{a})^2}{||\textbf{h}_m||^2 (||\textbf{h}_m||^2||\textbf{a}||^2-(\textbf{h}_m^T\textbf{a})^2) P_m^2 + (2||\textbf{h}_m||^2||\textbf{a}||^2-(\textbf{h}_m^T\textbf{a})^2) P_m + ||\textbf{a}||^2} = 2 \lambda.
\end{equation}
Since the left hand side (LHS) of (\ref{eq_op_sol_T3_fraction}) is nonzero for nonzero $\textbf{h}_m$, it implies that $\lambda > 0$. Here we replace $2\lambda$ with $\lambda$ without loss of generality. The equation (\ref{eq_op_sol_T3_fraction}) can be further simplified as
\begin{equation}\label{eq_optimal_solution_T3}
    \underbrace{||\textbf{h}_m||^2 (||\textbf{h}_m||^2||\textbf{a}||^2-(\textbf{h}_m^T\textbf{a})^2)}_{:=d_m} P_m^2 + \underbrace{(2||\textbf{h}_m||^2||\textbf{a}||^2-(\textbf{h}_m^T\textbf{a})^2)}_{:=b_m} P_m + \underbrace{||\textbf{a}||^2 - \frac{(\textbf{h}_m^T\textbf{a})^2}{\lambda} }_{:=c_m(\lambda)} = 0.
\end{equation}
Recall that the channel coefficients are real, thus $d_m, b_m$ and $c_m(\lambda)$ are all real. When $\textbf{h}$ and $\textbf{a}$ are collinear, we have $d_m = 0$. In this case the solution of (\ref{eq_optimal_solution_T3}) is given by
\begin{equation}
    P_m = -\frac{c_m(\lambda)}{b_m} = \frac{1}{\lambda}-\frac{1}{||\textbf{h}_m||^2}.
\end{equation} 
When $\textbf{h}$ and $\textbf{a}$ are not collinear, the solution of (\ref{eq_optimal_solution_T3}) is given by 
\begin{equation}
    P_m = \frac{ - b_m \pm \sqrt{b_m^2 - 4 d_m c_m(\lambda)}}{2d_m}.
\end{equation}
Since $d_m,b_m>0$, the solution with minus sign will be negative in any case, thus should be discarded. Recalling the complementary constraint $P_m \geq 0$, and based on the fact that $R_m(P_m)$ is increasing with $P_m$, we have the solution form as shown in Lemma~\ref{lem_optimal_solution_simple}.
\end{proof}

It can be observed that when $\textbf{h}$ and $\textbf{a}$ are collinear, the solution actually reduces to the classical water-filling solution\cite{Goldsmith97}. Since $c_m(\lambda)$ is monotonically increasing with $\lambda$, we note that $P_m^{KKT}(\lambda)$ is always a monotonically decreasing function of $\lambda$. The result in Lemma~\ref{lem_optimal_solution_simple} is reminiscent of the well-known water-filling algorithm, where $\lambda$ determines the water level. The water level will decide which channel states are active. Therefore, in practice we can apply bisection algorithm to determine the optimal $\lambda$ as shown in Algorithm~\ref{algo_optimal_simple}.

\begin{algorithm}
\caption{Optimal algorithm for $DP2^{(s)}$}\label{algo_optimal_simple}
\begin{algorithmic}[1]
\scriptsize
\State Define $q(\lambda) = \sum_{m\in\mathcal{S}} f_m (P_m^{KKT}(\lambda))^+$ \Comment{Note that the power constraint is $q(\lambda) \leq \bar P$}
\State Choose a small enough $\lambda_l$ (e.g. $\lambda_l<10^{-3}$) such that $q(\lambda_l) > \bar P$
\State Choose a large enough $\lambda_u$ (e.g. $\lambda_u>10^{3}$) such that $q(\lambda_u) < \bar P$
\State $\lambda \gets \frac{\lambda_l+\lambda_u}{2}$
\While{$\bar P-q(\lambda)>\epsilon \;\big|\big|\; \bar P - q(\lambda)<0$} \Comment{$\epsilon$ is a small number, e.g. $10^{-3}$}
\If{$\bar P-q(\lambda)>0$}
    \State $\lambda_u \gets \lambda$
\Else
    \State $\lambda_l \gets \lambda$
\EndIf
\State $\lambda \gets \frac{\lambda_l+\lambda_u}{2}$
\EndWhile
\State $P_m^* = (P_m^{KKT}(\lambda))^+,\forall m\in\mathcal{S}$
\end{algorithmic}
\end{algorithm}

As mentioned in the previous subsection, finding the optimal active set $\mathcal{A}^*$ is a highly non-trivial problem. In this simplified case, we study the property of $\mathcal{A}^*$ by introducing the following two definitions.

\begin{define}\label{def_good_set} (Good set) The good set $\mathcal{G}$ is the subset of $\{1,\ldots,M\}$ which include all $m$ with $||\textbf{\emph{h}}_m||^2 > ||\textbf{\emph{h}}_m||^2||\textbf{\emph{a}}||^2 - (\textbf{\emph{h}}_m^T\textbf{\emph{a}})^2$, i.e., $\mathcal{G} = \left\{m\in\{1,\ldots,M\}\Big| ||\textbf{\emph{h}}_m||^2 > ||\textbf{\emph{h}}_m||^2||\textbf{\emph{a}}||^2-(\textbf{\emph{h}}_m^T\textbf{\emph{a}})^2\right\}$.
\end{define}

\begin{define}\label{def_bad_set} (Bad set) The bad set $\mathcal{B}$ is the complement of the good set in $\{1,\ldots,M\}$, i.e., $\mathcal{B} = \{1,\ldots,M\} \backslash \mathcal{G}$.
\end{define}

The following theorem will present the relationship between $\mathcal{A}^*$ and these two sets.

\begin{theorem}\label{thm_noc1}
Let $\mathcal{A}^*$ denote an optimal active set for the symmetric power allocation policy. It always holds that $\mathcal{A}^*\cap \mathcal{B}=\emptyset$ and $\mathcal{A}^*\subseteq \mathcal{G}$. 
\end{theorem}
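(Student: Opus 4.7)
The plan is to reduce the claim to a pointwise statement about the sign of the rate function $R_m(P_m)$ and then invoke Lemma~\ref{lem_optimal_active_set}. By Definition~\ref{def_bad_set}, the sets $\mathcal{G}$ and $\mathcal{B}$ partition $\{1,\ldots,M\}$, so the two conclusions $\mathcal{A}^*\cap\mathcal{B}=\emptyset$ and $\mathcal{A}^*\subseteq\mathcal{G}$ are equivalent, and it suffices to prove the first.

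The core step will be to show that $R_m(P_m)\le 0$ for every $P_m\ge 0$ whenever $m\in\mathcal{B}$. Starting from the closed-form symmetric rate stated just before problem $DP1^{(s)}$,
\[
R_m(P_m) \;=\; \tfrac{1}{2}\log\frac{1+P_m\|\mathbf{h}_m\|^2}{\|\mathbf{a}\|^2+P_m\bigl(\|\mathbf{h}_m\|^2\|\mathbf{a}\|^2-(\mathbf{h}_m^T\mathbf{a})^2\bigr)},
\]
I would observe that both numerator and denominator are positive for $P_m\ge 0$ (the denominator by Cauchy--Schwarz, as already noted in the proof of Lemma~\ref{lem_optimal_solution_simple}), so $R_m(P_m)>0$ is equivalent to the one-line rearrangement
\[
P_m\Bigl[\|\mathbf{h}_m\|^2-\bigl(\|\mathbf{h}_m\|^2\|\mathbf{a}\|^2-(\mathbf{h}_m^T\mathbf{a})^2\bigr)\Bigr] \;>\; \|\mathbf{a}\|^2-1.
\]
For $m\in\mathcal{B}$, Definition~\ref{def_bad_set} says the bracket on the left is non-positive, so the entire left-hand side is $\le 0$ for any $P_m\ge 0$. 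Since $\mathbf{a}$ is a nonzero integer vector (the CF decoding coefficient; a zero $\mathbf{a}$ is trivially excluded), we have $\|\mathbf{a}\|^2\ge 1$, making the right-hand side non-negative. The strict inequality therefore fails for every $P_m\ge 0$, giving $R_m(P_m)\le 0$ on $\mathcal{B}$ as desired.

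With this pointwise fact, the theorem follows by contradiction. If some $m\in\mathcal{A}^*\cap\mathcal{B}$ existed, the definition of the active set would give $P_m^*>0$, while Lemma~\ref{lem_optimal_active_set} would force $R_m(P_m^*)>0$, contradicting the previous paragraph. Hence $\mathcal{A}^*\cap\mathcal{B}=\emptyset$, and equivalently $\mathcal{A}^*\subseteq\mathcal{G}$.

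I do not foresee a real obstacle: the argument is an elementary rearrangement plus a direct appeal to Lemma~\ref{lem_optimal_active_set}. The only subtlety worth flagging explicitly is the use of $\|\mathbf{a}\|^2\ge 1$, which is exactly where the integrality of the decoding coefficient enters; without it one could imagine fractional $\mathbf{a}$ for which the right-hand side above is negative and the argument breaks.
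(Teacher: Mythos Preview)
Your proposal is correct and follows essentially the same route as the paper: show that $R_m(P_m)\le 0$ for all $P_m\ge 0$ when $m\in\mathcal{B}$ by comparing numerator and denominator using the bad-set inequality together with $\|\mathbf{a}\|^2\ge 1$, then invoke Lemma~\ref{lem_optimal_active_set} for the contradiction. The only cosmetic difference is that you isolate the rearranged inequality explicitly, whereas the paper compares the numerator and denominator directly.
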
 
\begin{proof}
Let $P_1^*, \ldots, P_M^*$ denote the optimal power allocations corresponding to $\mathcal{A^*}$. Assume $\mathcal{A^*}$ includes some channel state $k\in\mathcal{B}$. By the definition of the bad set, $||\textbf{h}_k||^2\leq ||\textbf{h}_k||^2||\textbf{a}||^2 - (\textbf{h}_k^T\textbf{a})^2$. Since $\textbf{a}$ is a nonzero integer vector, we have $||\textbf{a}||\geq1$. Then we have $1+P_k ||\textbf{h}_k||^2 \leq ||\textbf{a}||^2 +  P_k(||\textbf{h}_k||^2||\textbf{a}||^2 - (\textbf{h}_k^T \textbf{a})^2),\forall P_k\geq 0$. Recalling the expression of $R_m(P_m)$, we have $R_k(P_k)\leq0, \forall P_k\geq 0$, especially $R_k(P_k^*)\leq0$. However, from Lemma~\ref{lem_optimal_active_set}, any channel state in the optimal active set will return a positive rate. Thus, a contradiction happens. Therefore $\mathcal{A}^*$ should not include any channel state in the bad set. By the relationship between the good set and the bad set, the optimal active set is a subset of the good set.
\end{proof} 

Furthermore, the following result states that when the power constraint $\bar P$ is large enough, the optimal active set $\mathcal{A}^*$ is equal to the good set $\mathcal{G}$.

\begin{theorem}\label{thm_opt_large_p}
The optimal active set of the problem $DP1^{(s)}$ is the good set, i.e., $\mathcal{A}^* = \mathcal{G}$ when the power constraint $\bar P$ is larger than $\bar P_o = \sum_{m\in\mathcal{G}} f_m  P_m^{KKT}(\lambda_o)$ with
$$\lambda_o = \min \left\{\lambda \Big| R_m(P_m^{KKT}(\lambda)) = R_m^\prime(P_m^{KKT}(\lambda))P_m^{KKT}(\lambda), m\in\mathcal{G}\right\},$$ 
where $R_m(\cdot)$ is given in (\ref{primal_problem_discrete}) and $P_m^{KKT}(\lambda)$ is given by (\ref{eq_P_KKT}).
\end{theorem}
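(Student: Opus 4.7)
The plan is to combine the Lagrangian structure of $DP1^{(s)}$ with a per-channel inclusion analysis. By Theorem~\ref{thm_noc1} we already have $\mathcal{A}^*\subseteq\mathcal{G}$, so it suffices to establish the reverse inclusion $\mathcal{G}\subseteq\mathcal{A}^*$ whenever $\bar P>\bar P_o$. The key tool is a weak-dual bound that, despite the non-convexity of $DP1^{(s)}$, will be attained by a constructive primal and thus yield optimality.

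First I would decouple across channels. For any fixed $\lambda>0$ and any feasible $\{P_m\}$,
\[
\sum_{m=1}^{M}f_m R_m^+(P_m)\;\le\;\sum_{m=1}^{M}f_m\phi_m(\lambda)+\lambda\bar P,\qquad \phi_m(\lambda)\;:=\;\max_{P\ge 0}\bigl(R_m^+(P)-\lambda P\bigr),
\]
with equality iff every $P_m$ is a pointwise maximizer of $R_m^+(\cdot)-\lambda(\cdot)$ and $\sum_m f_m P_m=\bar P$. For $m\in\mathcal{B}$, $R_m\le 0$ everywhere so $\phi_m(\lambda)=0$ with unique maximizer $P=0$. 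For $m\in\mathcal{G}$ the only candidates are $P=0$ (value $0$) and the interior critical point $P_m^{KKT}(\lambda)$ from Lemma~\ref{lem_optimal_solution_simple}, yielding the value $\psi_m(\lambda):=R_m(P_m^{KKT}(\lambda))-\lambda P_m^{KKT}(\lambda)$. Hence channel $m\in\mathcal{G}$ enters the pointwise optimum precisely when $\psi_m(\lambda)>0$.

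Next I would pin down $\lambda_o$ through the structure of $\psi_m$. The envelope identity gives $\psi_m'(\lambda)=-P_m^{KKT}(\lambda)<0$ (using $R_m'(P_m^{KKT}(\lambda))=\lambda$), so $\psi_m$ is strictly decreasing. As $\lambda\downarrow 0$, $P_m^{KKT}(\lambda)\to\infty$ while $R_m(P_m^{KKT}(\lambda))$ tends to the positive limit $\tfrac{1}{2}\log\tfrac{\|\textbf{h}_m\|^2}{\|\textbf{h}_m\|^2\|\textbf{a}\|^2-(\textbf{h}_m^T\textbf{a})^2}$ (which is positive by the definition of $\mathcal{G}$), and $\lambda P_m^{KKT}(\lambda)$ stays bounded, so $\psi_m(\lambda)>0$ for small $\lambda$. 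For $\lambda$ large enough that $P_m^{KKT}(\lambda)\to 0$, $\psi_m(\lambda)\to R_m(0)=-\log\|\textbf{a}\|\le 0$. Therefore there is a unique threshold $\lambda_m^{\mathrm{th}}$ with $\psi_m(\lambda_m^{\mathrm{th}})=0$, and channel $m$ is pointwise active iff $\lambda<\lambda_m^{\mathrm{th}}$. Substituting $\lambda=R_m'(P_m^{KKT}(\lambda))$ into $\psi_m(\lambda)=0$ rewrites the threshold as the tangent-through-origin equation $R_m(P_m^{KKT}(\lambda))=R_m'(P_m^{KKT}(\lambda))P_m^{KKT}(\lambda)$ appearing in the statement, so $\lambda_o=\min_{m\in\mathcal{G}}\lambda_m^{\mathrm{th}}$.

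Finally I would couple this to the power budget. For any $\lambda<\lambda_o$, monotonicity of each $\psi_m$ forces $\psi_m(\lambda)>0$ on all of $\mathcal{G}$, so the pointwise maximizer set coincides with $\mathcal{G}$ and uses total expected power $q(\lambda)=\sum_{m\in\mathcal{G}}f_m P_m^{KKT}(\lambda)$. Since each $P_m^{KKT}$ is strictly decreasing in $\lambda$, so is $q$, and $q(\lambda_o)=\bar P_o$. For $\bar P>\bar P_o$ there is a unique $\lambda^*\in(0,\lambda_o)$ with $q(\lambda^*)=\bar P$; the pointwise maximizers at $\lambda^*$ form a feasible primal that attains the dual bound, hence are globally optimal for $DP1^{(s)}$, and their active set is exactly $\mathcal{G}$. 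The main obstacle will be rigorously handling the non-smoothness of $R_m^+$ inside a non-convex primal: the pointwise decoupling above is what rescues strong duality along the particular $\lambda^*$ we construct, so the non-convexity effectively reduces to selecting the correct active set. A smaller subtlety is verifying that $\lambda_m^{\mathrm{th}}$ is finite in the collinear degeneracy (where $R_m(\infty)=\infty$ and $\lambda P_m^{KKT}(\lambda)\to 1$ as $\lambda\downarrow 0$), which follows by a direct check that $\psi_m(\lambda)\to R_m(0)\le 0$ as $\lambda\uparrow\|\textbf{h}_m\|^2$.
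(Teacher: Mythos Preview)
Your argument is correct but follows a genuinely different route from the paper's. Both hinge on the same scalar quantity: your $\psi_m(\lambda)=R_m(P_m^{KKT}(\lambda))-\lambda P_m^{KKT}(\lambda)$ is exactly the paper's $l_m(x)=R_m(x)-R_m'(x)x$ evaluated at $x=P_m^{KKT}(\lambda)$, and both identify $\lambda_o$ as the smallest root of this quantity over $m\in\mathcal{G}$. The difference is in how this quantity is deployed. The paper argues by contradiction: it assumes $\mathcal{A}^*=\mathcal{G}\setminus\mathcal{K}$ with $\mathcal{K}\neq\emptyset$, compares the optimal value on $\mathcal{A}^*$ to the KKT allocation on all of $\mathcal{G}$ at the associated multiplier $\lambda_g$, and chains the concavity inequality $R_m(x_2)-R_m(x_1)\le R_m'(x_1)(x_2-x_1)$ together with the equal-budget identity to force some $k\in\mathcal{K}$ with $l_k(P_k^{KKT}(\lambda_g))\le 0$, contradicting $\lambda_g<\lambda_o$. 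You instead write down the weak-dual (Lagrangian) upper bound, decouple it channel-by-channel, observe that each $\phi_m$ has only two candidate maximizers, and then \emph{construct} a $\lambda^*<\lambda_o$ at which every $m\in\mathcal{G}$ strictly prefers the interior allocation and the budget binds; since the feasible primal you built attains the dual bound, it is globally optimal with active set $\mathcal{G}$. Your route is more constructive and makes transparent why the non-convexity of $R_m^+$ is harmless in this regime (the pointwise maxima all land on the smooth branch), at the cost of articulating a duality framework; the paper's route is more elementary in that it never invokes duality, but the chain of inequalities leading to the contradiction is less direct.
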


\begin{proof}
Assume $\mathcal{A}^*\neq\mathcal{G}$ when $\bar P>\bar P_o$. From Theorem \ref{thm_noc1} $\mathcal{A}^*\subseteq\mathcal{G}$, then $\mathcal{A}^*\subset\mathcal{G}$. Denote $\mathcal{A}^* = \mathcal{G}\backslash\mathcal{K}$, where $\mathcal{K}\subset \mathcal{G}$ is some non-empty set. From Lemma~\ref{lem_simple_prob} and Lemma~\ref{lem_optimal_solution_simple}, since $\mathcal{A}^*$ is given, the optimal solution of the problem $DP1^{(s)}$ can be given by Algorithm~\ref{algo_optimal_simple} with $\mathcal{S} = \mathcal{A}^*$. Denote the optimal Lagrangian multiplier as $\lambda_k$. Then the nonzero power allocation is given by $P_m^{KKT}(\lambda_k),m\in\mathcal{A}^*$, and the optimal average rate is $\sum_{m\in\mathcal{A}^*} f_m R_m(P_m^{KKT}(\lambda_k))$. Here, we omit the $\max(0,\cdot)$ function in the rate expression from Lemma~\ref{lem_optimal_active_set}. In addition, the power constraint gives $\sum_{m\in\mathcal{A}^*} f_m P_m^{KKT}(\lambda_k) = \bar P$.

We can also derive a non-optimal solution by applying Algorithm~\ref{algo_optimal_simple} with $\mathcal{S} = \mathcal{G}$. Denote the Lagrangian multiplier of this solution as $\lambda_g$. Then the power allocation is given by $P_m^{KKT}(\lambda_g)$ for $m\in\mathcal{G}$ and zero for other channel states. The corresponding average rate is $\sum_{m\in\mathcal{G}} f_m R_m^+(P_m^{KKT}(\lambda_g))$, and the power constraint gives $\sum_{m\in\mathcal{G}} f_m P_m^{KKT}(\lambda_g) = \bar P$.

From the optimality of $\mathcal{A}^*$, we have $\sum_{m\in\mathcal{A}^*} f_m R_m(P_m^{KKT}(\lambda_k)) \geq \sum_{m\in\mathcal{G}} f_m R_m^+(P_m^{KKT}(\lambda_g))$. Since $\max(0,x)\geq x,\forall x$, we have $\sum_{m\in\mathcal{A}^*} f_m R_m(P_m^{KKT}(\lambda_k)) \geq \sum_{m\in\mathcal{G}} f_m R_m(P_m^{KKT}(\lambda_g))$. By moving terms, the following is observed.
\begin{equation}\label{eq_ineq_a_k}
    \sum_{m\in\mathcal{A}^*} f_m \big(R_m(P_m^{KKT}(\lambda_k)) - R_m(P_m^{KKT}(\lambda_g))\big) \geq \sum_{m\in\mathcal{K}} f_m R_m(P_m^{KKT}(\lambda_g)).
\end{equation}
From the power constraint, $\sum_{m\in\mathcal{A}^*} f_m P_m^{KKT}(\lambda_k) = \sum_{m\in\mathcal{G}} f_m P_m^{KKT}(\lambda_g)$. As $\mathcal{A}^*\subset\mathcal{G}$, $P_m^{KKT}(\lambda_k) \geq P_m^{KKT}(\lambda_g),\forall m\in\mathcal{A}^*$. This is because $P_m^{KKT}(\lambda)$ is monotonically increasing with $\lambda$. By moving terms of the power equation, we have
\begin{equation}\label{eq_power_a_k}
    \sum_{m\in\mathcal{A}^*} f_m (P_m^{KKT}(\lambda_k) - P_m^{KKT}(\lambda_g)) = \sum_{m\in\mathcal{K}} f_m P_m^{KKT}(\lambda_g)
\end{equation}
As $R_m(P_m)$ is concave, it holds that $R_m(x_2) - R_m(x_1) \leq R_m^\prime(x_1)(x_2-x_1), \forall x_2\geq x_1 \geq 0$. Then $R_m(P_m^{KKT}(\lambda_k)) - R_m(P_m^{KKT}(\lambda_g)) \leq R_m^\prime(P_m^{KKT}(\lambda_g))(P_m^{KKT}(\lambda_k)-P_m^{KKT}(\lambda_g)),\forall m\in\mathcal{A}^*$. Thus, combining (\ref{eq_ineq_a_k}), we have
\begin{equation}\label{eq_ineq_concave}
    \sum_{m\in\mathcal{A}^*} f_m R_m^\prime(P_m^{KKT}(\lambda_g))(P_m^{KKT}(\lambda_k)-P_m^{KKT}(\lambda_g)) \geq \sum_{m\in\mathcal{K}} f_m R_m(P_m^{KKT}(\lambda_g)).
\end{equation}
Note that $R_m^\prime(P_m^{KKT}(\lambda_g)) = \lambda_g,\forall m\in\mathcal{G}$, then it can be lifted out of the summation on the left side of (\ref{eq_ineq_concave}). Then using (\ref{eq_power_a_k}), we can rewrite (\ref{eq_ineq_concave}) as 
\begin{equation}\label{eq_ineq_lift_derivative}
    \lambda_g \sum_{m\in\mathcal{K}} f_m P_m^{KKT}(\lambda_g) \geq \sum_{m\in\mathcal{K}} f_m R_m(P_m^{KKT}(\lambda_g)).
\end{equation}
Placing $\lambda_g$ into the summation of the left side results in
\begin{equation}\label{eq_ineq_reduce_k}
    \sum_{m\in\mathcal{K}} f_m \big(R_m^\prime(P_m^{KKT}(\lambda_g)) P_m^{KKT}(\lambda_g) - R_m(P_m^{KKT}(\lambda_g))\big) \geq 0.
\end{equation}
This indicates that there exists at least one channel state $k\in\mathcal{K}$ such that
\begin{equation}\label{eq_ineq_contradiction_k}
    R_k^\prime(P_k^{KKT}(\lambda_g)) P_k^{KKT}(\lambda_g) - R_k(P_k^{KKT}(\lambda_g)) \geq 0.
\end{equation}

Next we will find out a contradiction to the above statement. Recall that $R_m(x)$ is an increasing concave function for $x\geq 0$ and $m\in\mathcal{G}$, i.e., $R_m^\prime(x)>0$, $R_m^{\prime\prime}(x)<0$. Thus, the function $l_m(x) = R_m(x)-R_m^\prime(x)x$ is an increasing function of $x$ since $l_m^\prime(x) = -R_m^{\prime\prime}(x)>0$. In addition, $l_m(0) = R_m(0) = -\frac{1}{2}$ and $\lim_{x\rightarrow\infty} l_m(x) = \lim_{x\rightarrow\infty} R_m(x)>0$. So there exists only one $x_m>0$ such that $l_m(x_m) = 0$ for each $m\in\mathcal{G}$. Since $P_m^{KKT}(\lambda)$ is a monotonically decreasing function of $\lambda$, there exists only one $\lambda_m>0$ such that $l_m(P_m^{KKT}(\lambda_m)) = 0$ for each $m\in\mathcal{G}$, and $l_m(P_m^{KKT}(\lambda))>l_m(P_m^{KKT}(\lambda_m)) = 0,\forall 0 <\lambda < \lambda_m$. When $\bar P>\bar P_o$, from the expression of $\bar P_o$, we have $\lambda_g<\lambda_o\leq \lambda_m$. Thus, $l_m(P_m^{KKT}(\lambda_g)) > 0$ i.e.,
\begin{equation}\label{eq_ineq_contradiction}
    R_m(P_m^{KKT}(\lambda_g)) - R_m^\prime(P_m^{KKT}(\lambda_g))P_m^{KKT}(\lambda_g) > 0, \forall m\in\mathcal{G}.
\end{equation}
This leads to a contradiction of (\ref{eq_ineq_contradiction_k}). Thus, the assumption $\mathcal{A}^*\neq\mathcal{G}$ breaks. Therefore we can conclude that $\mathcal{A}^* = \mathcal{G}$ when $\bar P>\bar P_o$.
\end{proof}

Theorem 2 shows that when $\bar P>\bar P_o$, the target problem in the symmetric policy case can be solved efficiently with Lemma~\ref{lem_optimal_solution_simple} by setting $\mathcal{A}^*=\mathcal{G}$. In the following, we consider the low to medium power regime where the optimal active set could be a strict subset of $\mathcal{G}$. Moreover, we will discuss how to extend those algorithms to the asymmetric policy case previously formulated in $DP1$. 

\subsubsection{\textbf{Further discussion on the asymmetric power allocation policy scenario}}
We restate the objective function $R_m(\textbf{p}_m)$ when we have a general (asymmetric) power allocation policy.
\begin{equation}\label{eq_R_asm}
    R_m(\textbf{p}_m) = \frac{1}{2}\log\frac{1+||\sqrt{\textbf{p}_m}\circ\textbf{h}_m||^2}{||\textbf{a}||^2+(||\sqrt{\textbf{p}_m}\circ\textbf{h}_m||^2||\textbf{a}||^2-((\sqrt{\textbf{p}_m}\circ\textbf{h}_m)^T\textbf{a})^2)}.
\end{equation}
First notice that unlike the objective function for the symmetric policy, this objective function is non-convex in general. Hence the solution to the KKT conditions cannot guarantee a global optimal solution. Intuitively, the optimal power allocation tries to align the channel coefficients $\textbf{h}_m$ to the equation coefficients $\textbf{a}_m$ as close as possible to minimise the term $||\sqrt{\textbf{p}_m}\circ\textbf{h}_m||^2||\textbf{a}||^2-((\sqrt{\textbf{p}_m}\circ\textbf{h}_m)^T\textbf{a})^2$, hence achieve a larger rate. Particularly, the channel states in the Bad Set can have positive rate with asymmetric power allocation. This is different from the symmetric power policy scenario, and Theorem~\ref{thm_noc1} will not apply to the asymmetric policy scenario in general. If the power constraint is large enough, we naturally expect that all the channel states will be active. However, since we do not have the analytical expressions for the optimal solution in this asymmetric policy scenario, it is difficult to give a threshold like we did in Theorem~\ref{thm_opt_large_p} for the symmetric policy scenario. In the sequel, we will investigate how to determine the active set for a general asymmetric policy, and use off-the-shelf numerical methods to solve the associated optimization problem in $DP2$.

\subsection{Power allocation algorithms for the discrete case}\label{sec_algo_d}

\renewcommand{\thealgorithm}{A\arabic{algorithm}}
\setcounter{algorithm}{-1}

We will first introduce some benchmark algorithms and then propose our iterative algorithm to identify the optimal active set for a general power constraint. We present the algorithms for both symmetric power allocation policy case (special case) and the asymmetric policy case (general case) on the left and the right side of the table, respectively. 

The first benchmark algorithm Algorithm~\ref{algo_const_power} is the trivial power allocation scheme that applies equal power to all users at all channel states.

\begin{algorithm}[h]
\caption{Constant power}\label{algo_const_power}
  \begin{multicols}{2}
    \begin{algorithmic}[1]
      \scriptsize
      \Statex $DP1^{(s)}$:
      \State $P_m = \bar P, \forall m\in\mathcal{G}$.
    \end{algorithmic}
    \columnbreak
    \begin{algorithmic}
      \scriptsize
      \State $DP1$:
      \State $P_{lm} = \bar P, \forall l\in\{1,\ldots,L\}, m\in\{1,\ldots,M\}$.
    \end{algorithmic}
  \end{multicols}
\end{algorithm}

We consider Algorithm~\ref{algo_water_filling}, which directly sets all the possible channel state to be active. Lemma~\ref{lem_optimal_solution_simple} provided a more explicit characterization of the solution for the symmetric policy. The exact expression is different from the classical water-filling solution. But in order to keep the notations simple, we still call this algorithm ``water-filling algorithm''.

\begin{algorithm}[h]
\caption{Water-filling algorithm}\label{algo_water_filling}
  \begin{multicols}{2}
    \begin{algorithmic}[1]
      \scriptsize
      \Statex $DP1^{(s)}$:
      \State Set $\mathcal{S}^{(1)} = \mathcal{G}$, $\{P_m^{(1)}\}_{m=1}^M = \{P_m^{(2)}\}_{m=1}^M = \textbf{0}$;
      \State Run Algorithm~\ref{algo_optimal_simple} with $\mathcal{S} = \mathcal{S}^{(1)}$;
      \State Update solution $\{P_m^{(1)}\}_{m\in \mathcal{S}^{(1)}}$;
      \State Update $\mathcal{S}^{(2)} = \{m\in\mathcal{G} \big| R_m(P_m^{(1)})>0\}$;
      \State Run Algorithm~\ref{algo_optimal_simple} with $\mathcal{S} = \mathcal{S}^{(2)}$;
      \State Update solution $\{P_m^{(2)}\}_{m\in \mathcal{S}^{(2)}}$.
    \end{algorithmic}
    \columnbreak
    \begin{algorithmic}
      \scriptsize
      \State $DP1$:
      \State Set $\mathcal{S}^{(1)} = \{1,\ldots,M\}$, $\textbf{P}^{(1)}= \textbf{P}^{(2)} = \textbf{0}$;
      \State Numerically solve $DP2$; 
      \State Update solution $\{\textbf{p}_m^{(1)}\}_{m\in \mathcal{S}^{(1)}}$;
      \State Update $\mathcal{S}^{(2)} = \{m \big| R_m(\textbf{p}_m^{(1)})>0\}$;
      \State Numerically solve $DP2$; 
      \State Update solution $\{\textbf{p}_m^{(2)}\}_{m\in \mathcal{S}^{(2)}}$.
    \end{algorithmic}
  \end{multicols}
\end{algorithm}

Here, we initially choose the Good Set as the starter in the special case according to Theorem~\ref{thm_noc1}. Since Algorithm~\ref{algo_optimal_simple} only solves the modified problem $DP2^{(s)}$ (or $DP2$ for the general case) but not the original problem, the solution $\{P_m^{(1)}\}_{m\in \mathcal{S}}$ (or $\{\textbf{p}_m^{(1)}\}_{m\in \mathcal{S}}$ for the general case) in the first iteration may allocate positive power to channel states under which the rate is zero, due to the $\max(0,\cdot)$ function in the original objective function. Therefore, it is natural to exclude the channel states with zero rate and do another iteration with the remaining channel states. This is reflected in the $3rd$ and the $4th$ line in Algorithm~\ref{algo_water_filling}. We also point out that after the second iteration, every channel state will be associated with a positive rate. As shown in Theorem~\ref{thm_opt_large_p}, this algorithm is able to solve $DP1^{(s)}$ optimally when the power constraint is large enough, but in general we will show later with numerical examples, this two-step algorithm is not guaranteed to find the optimal active set. This is because it fails to capture enough characteristics of the objective function as the $\max(0,\cdot)$ function is omitted. To this end, our next algorithm (Algorithm~\ref{algo_iter_order}) attempts to take the channel state information into account. Specifically, we will include channel states in the active set by constructing a certain order on different channel states.

\begin{algorithm}[h]
\caption{Proposed iterative algorithm}\label{algo_iter_order}
  \begin{multicols}{2}
    \begin{algorithmic}[1]
      \scriptsize
      \Statex $DP1^{(s)}$:
      \State Set $\mathcal{S} = \mathcal{G}_s$, where $\mathcal{G}_s$ is the ordered version of the good set in the ascending order according to the ordering method, i.e., from the bad channels to the good ones;
      \State Set $\{P_m\}_{m=1}^M = \{P_m^\prime\}_{m=1}^M = \textbf{0}$;
      \State Set $\mathcal{S}^\prime = \mathcal{S}$;
      \State Let $r(P_m) = \sum_{m\in\mathcal{S}} f_m R_m^+(P_m)$;
      \While{$r(P_m^\prime) \geq r(P_m)\;\&\&\; \mathcal{S}^\prime \neq \emptyset$}
      \State $\{P_m\}_{m=1}^M \gets \{P_m^\prime\}_{m=1}^M$, $\{P_m^\prime\}_{m=1}^M \gets \textbf{0}$, $\mathcal{S} \gets \mathcal{S}^\prime$;
      \State Run Algorithm~\ref{algo_optimal_simple} $\;\rightarrow\;$ Update solution $\{P_m^\prime\}_{m\in\mathcal{S}}$;
      \If{$\exists m$ such that $R_m^+(P_m^\prime) = 0$}
      \State $\mathcal{\tilde S} = \{m\in\mathcal{S} \big| R_m(P_m^\prime)>0\}$;
      \State $\{P_m^\prime\}_{m\notin\mathcal{\tilde S}} = \textbf{0}$;
      \State Run Algorithm~\ref{algo_optimal_simple} with $\mathcal{\tilde S} \;\rightarrow\;$ Update $\{P_m^\prime\}_{m\in\mathcal{\tilde S}}$;
      \EndIf
      \State $\mathcal{S}^\prime \gets \mathcal{S}\backslash\{1$st channel state$\}$;
      \EndWhile
      \State $\{P_m^*\}_{m=1}^M = \{P_m\}_{m=1}^M$.
    \end{algorithmic}
    \columnbreak
    \begin{algorithmic}
      \scriptsize
      \State $DP1$:
      \State Set $\mathcal{S}$ as the ordered version of $\{1,\ldots,M\}$ in the ascending order according to the ordering method, i.e., from the bad channels to the good ones;
      \State Set $\textbf{P} = \textbf{P}^\prime = \textbf{0}$, where $\textbf{P} = \{\textbf{p}_1,\ldots,\textbf{p}_M\}$;
      \State Set $\mathcal{S}^\prime = \mathcal{S}$;
      \State Let $r(\textbf{P}) = \sum_{m\in\mathcal{S}} f_m R_m^+(\textbf{p}_m)$;
      \While{$r(\textbf{P}^\prime) \geq r(\textbf{P})\;\&\&\; \mathcal{S}^\prime \neq \emptyset$}
      \State $\textbf{P} \gets \textbf{P}^\prime$, $\textbf{P}^\prime \gets \textbf{0}$, $\mathcal{S} \gets \mathcal{S}^\prime$;
      \State Solve $DP2$ $\;\rightarrow\;$ Update solution $\{\textbf{p}_m^\prime\}_{m\in\mathcal{S}}$;
      \If{$\exists m$ such that $R_m^+(\textbf{p}_m^\prime) = 0$}
      \State $\mathcal{\tilde S} = \{m\in\mathcal{S} \big| R_m(\textbf{p}_m^\prime)>0\}$;
      \State $\{\textbf{p}_m^\prime\}_{m\notin\mathcal{\tilde S}} = \textbf{0}$;
      \State Solve $DP2$ with $\mathcal{\tilde S} \;\rightarrow\;$ Update $\{\textbf{p}_m^\prime\}_{m\in\mathcal{\tilde S}}$;
      \EndIf
      \State $\mathcal{S}^\prime \gets \mathcal{S}\backslash\{1$st channel state$\}$;
      \EndWhile
      \State $\textbf{P}^* = \textbf{P}$;.
    \end{algorithmic}
  \end{multicols}
\end{algorithm}

The idea of Algorithm~\ref{algo_iter_order} is to sort the channel states in a certain order, which indicates the goodness of the channels, and solve the simplified problem $DP2^{(s)}$ (or $DP2$ in the general case) with the ordered channel set shrunk by one channel state per iteration. Specifically, for each iteration, we successively eliminate the worst channel in the set $\mathcal{S}$, until the sum rate starts to decrease. Therefore, the performance of this algorithm highly depends on the choice of the ordering methods. In fact, the water-filling algorithm for the symmetric power policy case also has an implicit order of the channel states. The KKT solution shows the power allocation grows monotonically with $1/\lambda$. Since $\lambda$ is a common parameter for all channel states, the channels will become active (with positive power allocated) gradually in a certain order as $1/\lambda$ increases. Thus, the order criterion of the water-filling algorithm is just the order in which channels become active as $1/\lambda$ increases. This particular ordering does give an optimal solution in some cases but is in general suboptimal, as we will show in later examples. To this end, we propose two other ordering method with explicit characterizations.

\textbf{Ordering method 1:} The first ordering method is to order the channel states based on the point where $R_m(P_m) = 0$ in the symmetric power policy case, i.e., $P_m = \frac{||\textbf{a}||^2 - 1}{||\textbf{h}_m||^2-(||\textbf{h}_m||^2||\textbf{a}||^2-(\textbf{h}_m^T\textbf{a})^2)}$. So we can order the channel states by the following criterion
\begin{equation}\label{eq_c1}
    O^{(1)}_m = ||\textbf{h}_m||^2-(||\textbf{h}_m||^2||\textbf{a}||^2-(\textbf{h}_m^T\textbf{a})^2),
\end{equation}
which indicates the goodness of the channel states when the power constraint is small. A larger $O^{(1)}_m$ means the rate of the channel state $m$ needs a smaller power to be positive, which is better.

\textbf{Ordering method 2:} The second ordering method is to order the channel states based on the asymptotic value of $R_m(P_m)$ in the symmetric power policy case, i.e., $\lim_{P_m\rightarrow\infty}R_m(P_m) = \frac{1}{2}\log\frac{||\textbf{h}_m||^2}{||\textbf{h}_m||^2||\textbf{a}||^2-(\textbf{h}_m^T\textbf{a})^2}$. So we can order the channel states by the following criterion
\begin{equation}\label{eq_crtr2}
    O^{(2)}_m = \frac{||\textbf{h}_m||^2}{||\textbf{h}_m||^2||\textbf{a}||^2-(\textbf{h}_m^T\textbf{a})^2}.
\end{equation}
It shows the goodness of the channel states when the power constraint is large. With a larger $O^{(2)}_m$, the channel state $m$ has a larger rate asymptotically.

There is a clear intuition behind the above two ordering methods. The norm of the channel coefficient $||\textbf{h}_m||^2$ indicates the strength of the channel where a larger norm corresponds to a better channel. The difference term $||\textbf{h}_m||^2||\textbf{a}||^2-(\textbf{h}_m^T\textbf{a})^2$ indicates how the coefficients $\textbf{a}$ is "aligned" with the channel coefficients. Notice this term is always non-negative and is zero if and only if when $\textbf{h}_m$ and $\textbf{a}$ are collinear. Hence a larger difference corresponds to a worse channel. There is a tension between these two terms, and the two ordering methods captures this tension in either additively (method 1) or multiplicatively (method 2).

Apart from the algorithms mentioned before one can exhaustively evaluate all possible active sets, and solve the corresponding power allocation policies. Notice that this algorithm can guarantee the optimality for the symmetric power allocation problem $DP1^{(s)}$, while it can only guarantee a local minimum in the asymmetric power policy case. This is because even if an optimal active set is given, the corresponding $DP2$ is still non-convex, and a numerical solution is not guaranteed to a global minimum. The procedure is presented in Algorithm~\ref{algo_exhaustive_search}.

\begin{algorithm}[h]
\caption{Exhaustive search}\label{algo_exhaustive_search}
  \begin{multicols}{2}
    \begin{algorithmic}[1]
      \scriptsize
      \Statex $DP1^{(s)}$:
      \State Identify all the possible active set $\mathcal{A}_1,\ldots,\mathcal{A}_{2^M}$;
      \For{$i = 1:2^M$}
      \State Set $\{P_m^{(i)}\}_{m=1}^M = \textbf{0}$;
      \State $\mathcal{S} = \mathcal{A}_i$;
      \State Run Algorithm~\ref{algo_optimal_simple} $\;\rightarrow\;$ Update solution $\{P_m^{(i)}\}_{m\in\mathcal{S}}$;
      \EndFor
      \State $\{P_m^*\}_{m=1}^M = \max_{\{P_m^{(i)}\}_{m=1}^M} \sum_{m\in\mathcal{A}_i} f_m R_m^+(P_m^{(i)})$.
    \end{algorithmic}
    \columnbreak
    \begin{algorithmic}
      \scriptsize
      \State $DP1$:
      \State Identify all the possible active set $\mathcal{A}_1,\ldots,\mathcal{A}_{2^M}$;
      \For{$i = 1:2^M$}
      \State Set $\textbf{P}^{(i)} = \textbf{0}$;
      \State $\mathcal{S} = \mathcal{A}_i$;
      \State Numerically solve $DP2$ $\;\rightarrow\;$ Update solution $\{\textbf{p}_m^{(i)}\}_{m\in\mathcal{S}}$;
      \EndFor
      \State $\textbf{P}^* = \max_{\textbf{P}^{(i)}} \sum_{m\in\mathcal{A}_i} f_m R_m^+(\textbf{p}_m^{(i)})$.
    \end{algorithmic}
  \end{multicols}
\end{algorithm} 

The exhaustive search is not applicable for large $M$ due to its computational complexity. As an informal argument, let us denote the computational complexity of solving $DP2^{(s)}$ (or $DP2$ in the general case) by $c$. Then the complexity of Algorithm~\ref{algo_exhaustive_search} is given by $2^M c$, which increases exponentially with $M$. For other algorithms, \ref{algo_const_power} essentially requires no computation, while \ref{algo_water_filling} has a complexity of $2c$. The complexity of Algorithm~\ref{algo_iter_order} depends on the number of the iterations it runs and the number of the ordering methods it applies. For each method, the complexity is at most $2Mc$, which increases linearly with $M$. Note that if the power constraint is large enough as shown in Theorem~\ref{thm_opt_large_p}, the computational complexity is only $c$ for the optimal solution in the symmetric policy case.

\subsection{Numerical results for the discrete case}

\subsubsection{\textbf{Example 1}}
We look at a two-user example ($L=2$). Assume the channel states are distributed independently in the following Table~\ref{tab_csi_1}, and the computation coefficients are given by $\textbf{a} = (1\;1)^T$. Then the joint channel state distribution is given by Table~\ref{tab_joint_csi_1}. There are totally four possible channel states ($M=4$). According to Definition~\ref{def_good_set}, the good set in this case is given by $\mathcal{G} = \{1,2,3,4\}$.

\begin{table}[!ht]
    \centering
    \caption{Channel state distribution in Example 1}
    \label{tab_csi_1}
    \begin{tabular}{|c|c|c||c|c|c|}
        \hline
        $h_1$ & $1$ & $3$ & $h_2$ & $0.5$ & $2$ \\
        \hline
        $p_1$ & $0.6$ & $0.4$ & $p_2$ & $0.8$ & $0.2$ \\
        \hline
    \end{tabular}
\end{table}

\begin{table}[!ht]
    \centering
    \caption{Joint channel state distribution in Example 1}
    \label{tab_joint_csi_1}
    \begin{tabular}{|c|c|c|c|c|}
        \hline
        $\textbf{h}_m^T$ & $(1,0.5)$ & $(1,2)$ & $(3,0.5)$ & $(3,2)$ \\
        \hline
        $f_m$ & $0.48$ & $0.12$ & $0.32$ & $0.08$ \\
        \hline
    \end{tabular}
\end{table}

Let us first consider the symmetric power allocation policy case. We will first derive $\bar P_o$ from Theorem \ref{thm_opt_large_p}. We solve the equation $R_m^\prime(x_m)x_m = R_m(x_m)$ for $x_m$ for $m=1,2,3,4$, and then solve $\lambda_m$ from $P_m^{(KKT)} (\lambda_m) = x_m$. Then we take $\lambda_o = \min \lambda_m$. Thus, $\bar P_o$ is finally given by $\sum_{m=1}^4 f_m P_m^{(KKT)} (\lambda_o) = 2.09$.  

\begin{figure}[!ht]
    \centering
    \includegraphics[width = 0.7\linewidth]{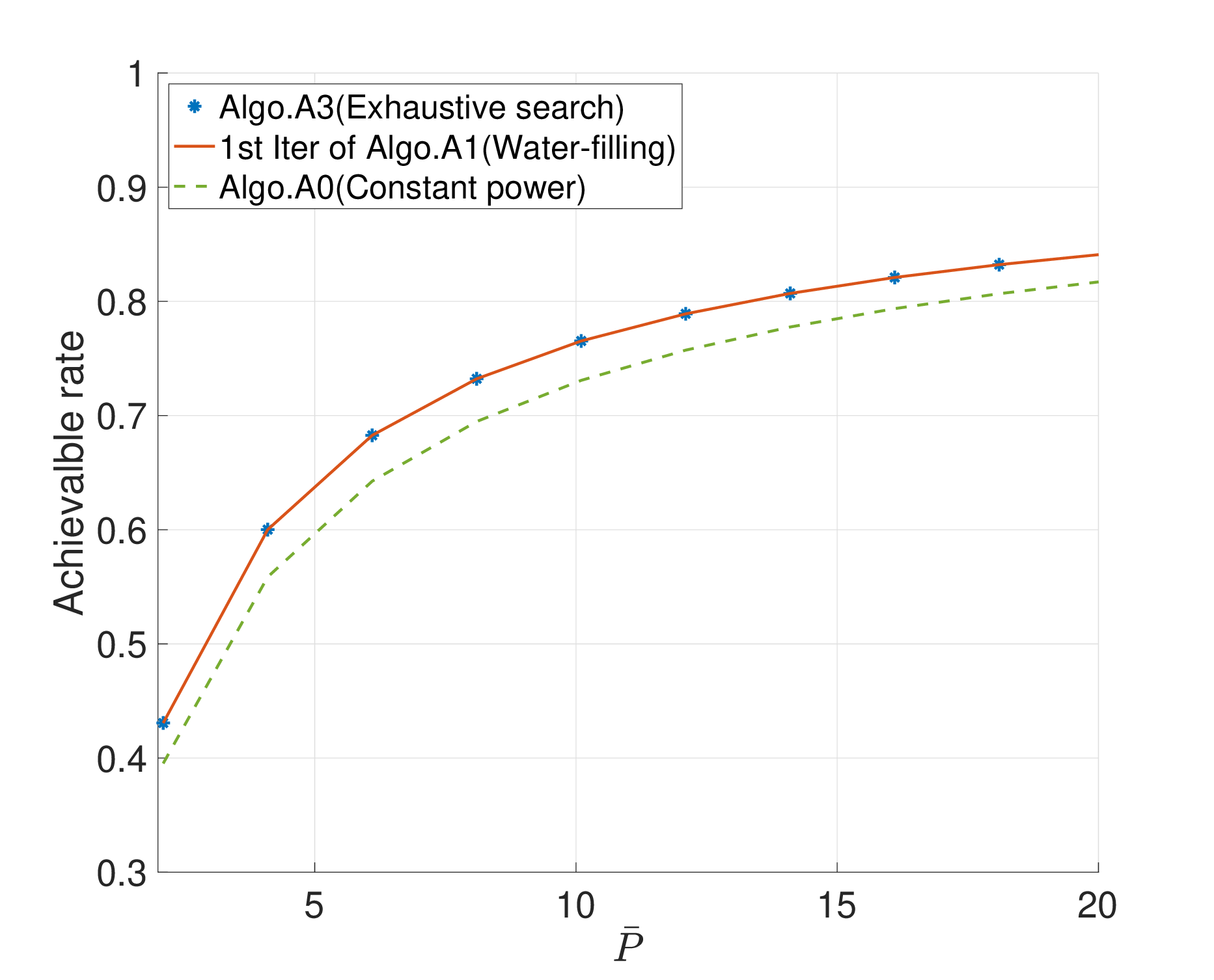}
    \caption{\small Achievable rate in the symmetric power policy case with $\bar P > \bar P_o$ in Example 1. We observe that Algorithm~\ref{algo_water_filling} gives the same result as Algorithm~\ref{algo_exhaustive_search} (hence optimal), as proven in Theorem~\ref{thm_opt_large_p}.}
    \label{fig_power_alloc_3_largeP}
\end{figure}

Fig.~\ref{fig_power_alloc_3_largeP} plots the expected achievable rate using the exhaustive search, the first iteration of Algorithm~\ref{algo_water_filling} and constant power policy with the power constraint larger than $\bar P_o = 2.09$. As discussed in the previous section, Algorithm~\ref{algo_exhaustive_search} (exhaustive search) will achieve the optimal expected rate. It can be observed that the first iteration of Algorithm~\ref{algo_water_filling} ($\{P_m^{(1)}\}_{m=1}^M$) also achieves optimality, which is consistent with Theorem~\ref{thm_opt_large_p}. Moreover, the average rate of Algorithm~\ref{algo_const_power} has a constant gap below the optimality for all power constraints.





\begin{figure}[!ht]
    \centering
    \includegraphics[width = 0.7\linewidth]{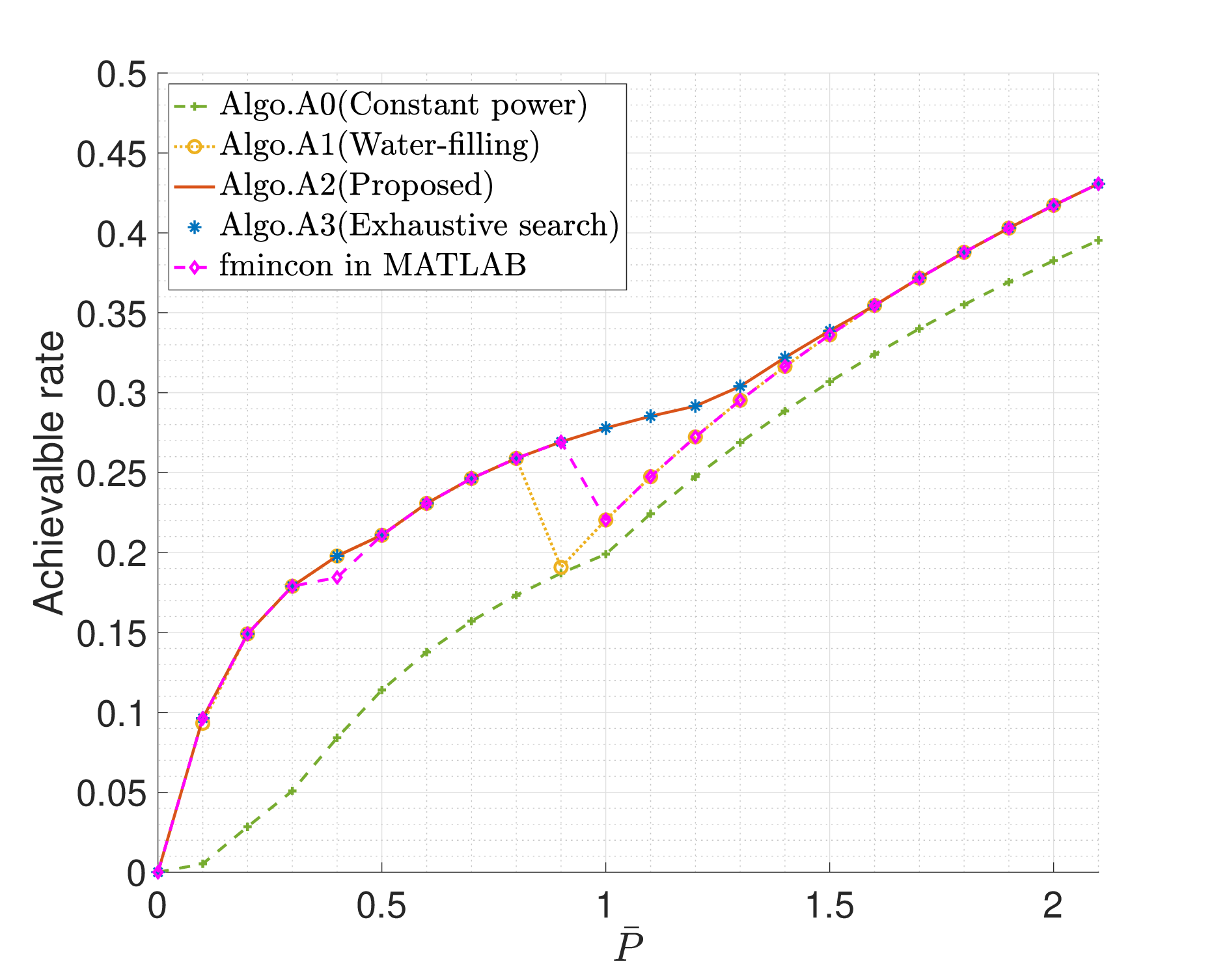}
    \caption{\small Achievable rate in the symmetric power policy case with $\bar P \leq \bar P_o$ in Example 1. We observe that Algorithm~\ref{algo_iter_order} gives the same result as Algorithm~\ref{algo_exhaustive_search} hence is optimal in this case.}
    \label{fig_power_alloc_3_smallP}
\end{figure}

Fig.~\ref{fig_power_alloc_3_smallP} shows the expected achievable rate for the power constraint no larger than $\bar P_o$. Algorithm~\ref{algo_exhaustive_search} (exhaustive search) gives the optimal expected rate. It is clear that Algorithm~\ref{algo_const_power} has the poorest performance with the trivial power allocation policy. Algorithm~\ref{algo_water_filling} is optimal for some power constraints but is in general sub-optimal. 
A sharp rate falling occurs from $\bar P = 0.8$ to $0.9$ because the power allocation policy derived by Algorithm~\ref{algo_water_filling} is regarding the modified objective function without $(\cdot)^+$. However, the power policy is applied to the primal objective when calculating the expected rate. 
Finally, we run Algorithm~\ref{algo_iter_order} with both ordering methods. For any value of the power constraint (up to $\bar P_o$), at least one of the two ordering methods finds the optimal power allocation policy. By taking the larger rate of the two, Fig.~\ref{fig_power_alloc_3_smallP} shows that our proposed algorithm is optimal for this example for all values of power constraints. We also run an off-the-shelf algorithm (\texttt{fmincon} in MATLAB) and find that it has a similar performance to the water-filling algorithm (A1) which is suboptimal for some power constraints.

Fig.~\ref{fig_power_alloc_3_APSR} shows an example for the general (asymmetric power allocation policy) case where the channel statistics are the same with the symmetric power policy example. We apply Algorithm~\ref{algo_water_filling}, \ref{algo_iter_order} and \ref{algo_exhaustive_search} for $DP1$, where both ordering methods are applied and the larger rate is taken. In the figure, we also plot the optimal result of the symmetric power policy case and the constant power case (Algorithm~\ref{algo_const_power}) for comparison. It can be observed that Algorithm~\ref{algo_water_filling} has a similar behaviour in asymmetric case compared to the symmetric case. It can match the performance of the exhaustive algorithm at some power constraint points while falls quite far at other points. However, it still results in a larger computation rate than the optimal symmetric policy. There are still sharp rate decreases in the curve of Algorithm~\ref{algo_water_filling} due to the same reason in Fig.~\ref{fig_power_alloc_3_smallP}. For the proposed algorithm (\ref{algo_iter_order}), it matches the rate derived from the exhaustive algorithm in all the power constraint regime in this example, which implies that the ordering methods may still work well for the general power allocation policy. We also observed that in the small power constraint regime, ordering method $2$ is usually not optimal while in a larger power constraint regime, ordering method 1 will become suboptimal, which is consistent with our analysis in Sec.III. As the power constraint increases, both ordering methods achieve the optimal rate since all channel states become active. The off-the-shelf algorithm (\texttt{fmincon} in MATLAB) is also sub-optimal for some power constraints. In particular, for some small power constraints (for example, $\bar P = 0,2,0.4$), it has less rate than Algorithm~A1 (water-filling algorithm). 

\begin{figure}[!ht]
    \centering
    \includegraphics[width = 0.7\linewidth]{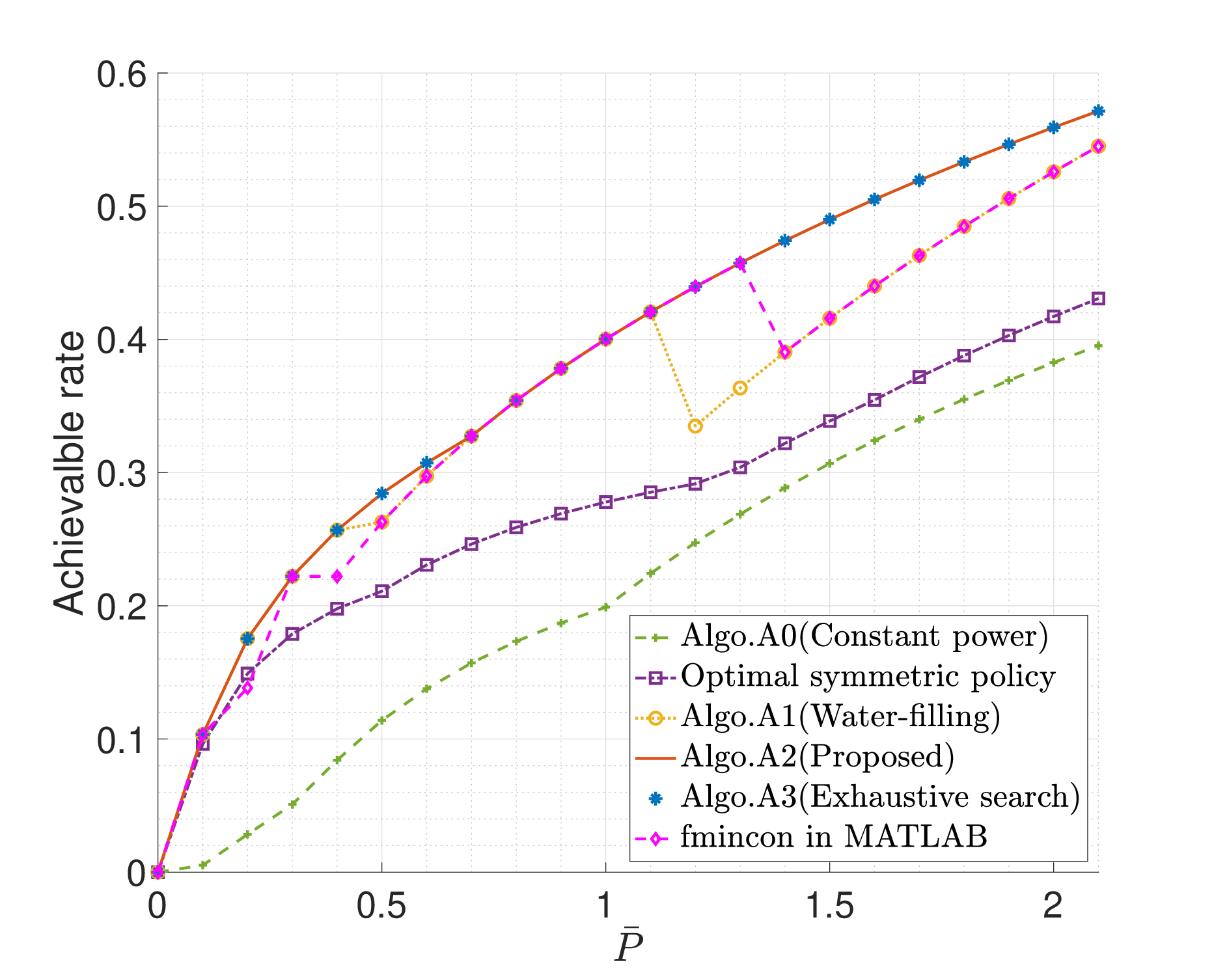}
    \caption{\small Achievable rate in the asymmetric power policy case in Example 1. We observe that Algorithm~\ref{algo_iter_order} gives the same result as Algorithm~\ref{algo_exhaustive_search} hence is optimal in this case.}
    \label{fig_power_alloc_3_APSR}
\end{figure}

\subsubsection{\textbf{Example 2}}
In this example, we will consider a two-user case, where each user has three possible channel states and their channel statistics are the same. The channel coefficients are given by $0.5,1,2.5$. The probabilities are are taken from the CDF of the Rayleigh distribution with unity scale parameter. The Probability mass function (PMF) of the channel coefficients for each user is given in Table.~\ref{tab_csi_2}. In this case, $M=9$. The computation coefficients are still given by $\textbf{a} = (1\;1)^T$. In this example, $\bar P_o$ in Theorem~\ref{thm_opt_large_p} is given by $5.02$. So we consider the power constraint regime $[0,5]$. We run the power allocation algorithms $A0-A3$ for both symmetric and asymmetric power policy scenarios and plot the average achievable rate against the power constraint in Fig~\ref{fig_power_alloc_3x3}. We only plot the results for $\bar P = [0.5,3.5]$ because the average rates for Algorithms $A0-A3$ are the same out of this range.
\begin{table}[!ht]
    \centering
    \caption{Channel state distribution in Example 2}
    \label{tab_csi_2}
    \begin{tabular}{|c|c|c|c|}
        \hline
        $h_{1,2}$ & $0.5$ & $1$ & $2.5$ \\
        \hline
        $p_{1,2}$ & $0.1175$ & $0.2760$ & 0.6065 \\
        \hline
    \end{tabular}
\end{table}

In Fig.~\ref{fig_power_alloc_3x3}, when the power constraint is very small, for example $\bar P = 0.5$, the number of the active channels is expected to be small. So both symmetric and asymmetric policy scenarios have a similar average rate. When $\bar P$ increases, as the number of the active channels also increases, the asymmetric policy gives a larger average rate than the symmetric policy. Regarding the performances of the different algorithms, Algorithm~\ref{algo_exhaustive_search} is the best since it explores all the possible active sets. Algorithm~\ref{algo_const_power} has a worst average rate with a constant gap below other algorithms. The differences between Algorithm~\ref{algo_exhaustive_search},~\ref{algo_water_filling} and~\ref{algo_iter_order} are not very big. In general, Algorithm~\ref{algo_water_filling} is almost always below the exhaustive search except for very large power constraints, where the optimality is guaranteed by Theorem.~\ref{thm_opt_large_p}. Algorithm~\ref{algo_iter_order} can achieve the exhaustive search performance for some power constraints. Specifically, in the symmetric policy scenario, it finds an optimal active set for most power constraints and fails when the power constraint is between $1.25$ and $2.25$. In the asymmetric policy scenario, it finds an optimal active set only when the power constraint is around $1.5$. In addition, Algorithm~\ref{algo_iter_order} gives better power policies than Algorithm~\ref{algo_water_filling} for almost all the power constraints in the symmetric policy scenario. It also provides a clear rate gain over Algorithm~\ref{algo_water_filling} when the power constraint is between $1$ and $2.5$ in the asymmetric power policy scenario.
\begin{figure}[!ht]
    \centering
    \includegraphics[width = 0.7\linewidth]{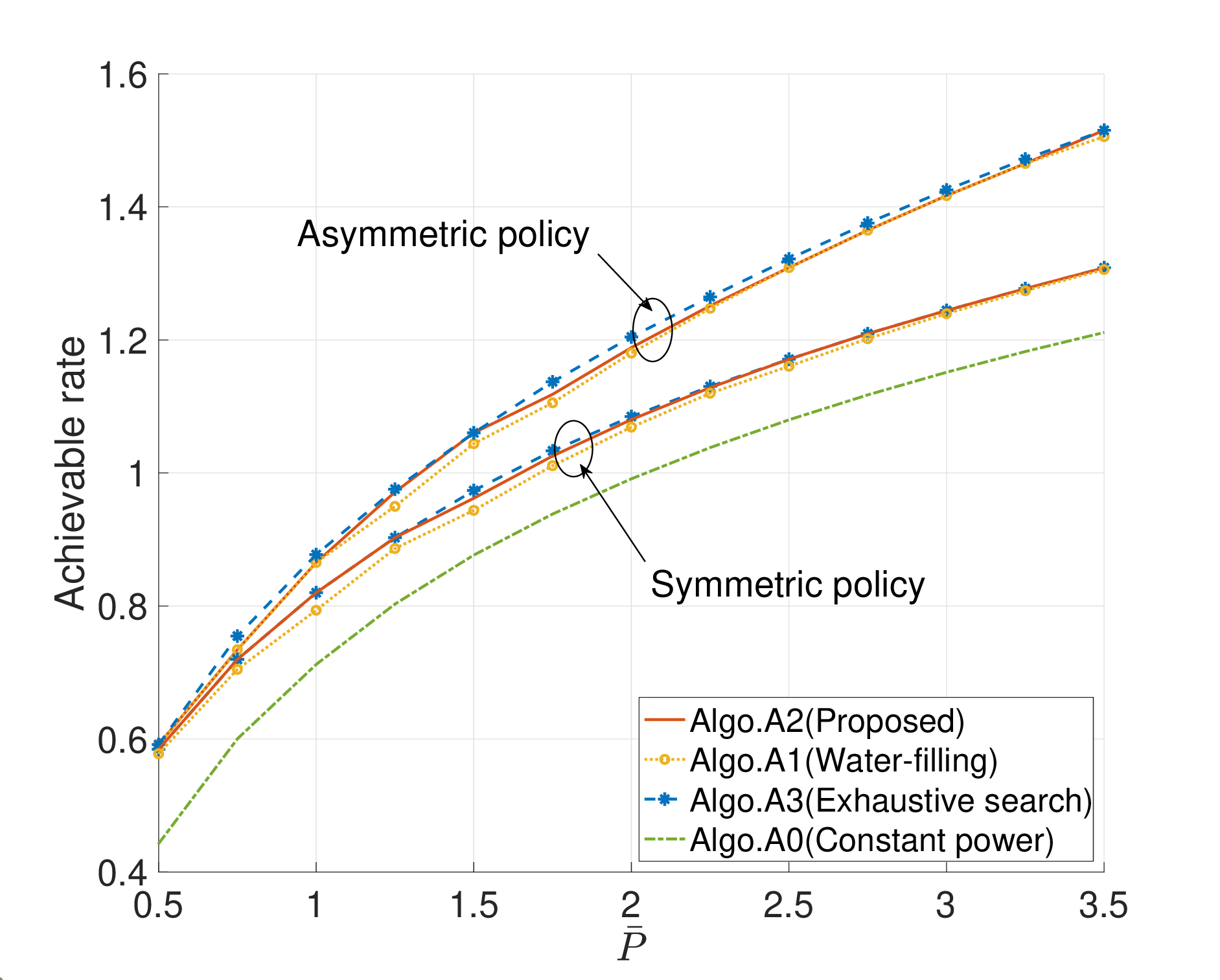}
    \caption{\small Achievable rate against the power constraint in Example 2.}
    \label{fig_power_alloc_3x3}
\end{figure}

\subsubsection{\textbf{Example 3}}
In this example, we will consider a larger number of possible channel states, which is more practically relevant. Still in the two-user case with computation coefficients given by $\textbf{a} = (1\;1)^T$, we assume each user has ten possible channel states. The channel coefficients are given by $0.25, 0.5, 0.75, 1, 1.25, 1.5, 2, 2.5, 3, 4$. The probabilities are are taken from the CDF of the Rayleigh distribution with unity scale parameter. The PMF of the channel coefficients for each user is given in Table~\ref{tab_csi_3}. In this case, $M=100$, and algorithms~\ref{algo_exhaustive_search} is not applicable due to the complexity. Note that $\bar P_o$ in Theorem~\ref{thm_opt_large_p} is given by $13.05$ in this example. So we consider the power constraint regime $[0, 13]$.We run the power allocation algorithms $A0-A2$ for both symmetric and asymmetric power policy scenarios and plot the average achievable rate against the power constraint in Fig~\ref{fig_power_alloc_6x6}. We only plot the results for $\bar P = [05,10]$ because the average rates for Algorithm~\ref{algo_water_filling} and Algorithm~\ref{algo_iter_order} are the same out of this range.
\begin{table}[!ht]
    \centering
    \caption{Channel state distribution in Example 3}
    \label{tab_csi_3}
    \begin{tabular}{|c|c|c|c|c|c|c|c|c|c|c|}
        \hline
        $h_{1,2}$ & $0.25$ & $0.5$ & $0.75$ & $1$ & $1.25$ & $1.5$ & $2$ & $2.5$ & $3$ & $4$ \\
        \hline
        $p_{1,2}$ & $0.0308$ & $0.0867$ & $0.1277$ & $0.1483$ & $0.1487$ & $0.1332$ & $0.1893$ & $0.0914$ & $0.0328$ & $0.0111$ \\
        \hline
    \end{tabular}
\end{table}

In Fig~\ref{fig_power_alloc_6x6}, the asymmetric policy gives a larger average rate than the symmetric policy even at a low power constraint, for example $\bar P = 0.5$. Through all the power constraints, Algorithm~\ref{algo_const_power} has a worst average rate with a constant gap below other algorithms. Although it looks like Algorithm~\ref{algo_iter_order} and Algorithm~\ref{algo_water_filling} have nearly the same average rates in the figure, we can find out the differences from their values. This can be clearly observed from the zoomed-in parts in Fig~\ref{fig_power_alloc_6x6}. If there is a rate difference between the algorithms, no matter how tiny it is, it indicates that different active sets are decided. Specifically, in the symmetric power policy scenario, Algorithm~\ref{algo_iter_order} gives better power policies than Algorithm~\ref{algo_water_filling} at all the considered power constraints. In the asymmetric power policy scenario, it finds better active sets than Algorithm~\ref{algo_water_filling} when the power constraint is between $4$ and $8.5$.
\begin{figure}[!ht]
    \centering
    \includegraphics[width = 0.7\linewidth]{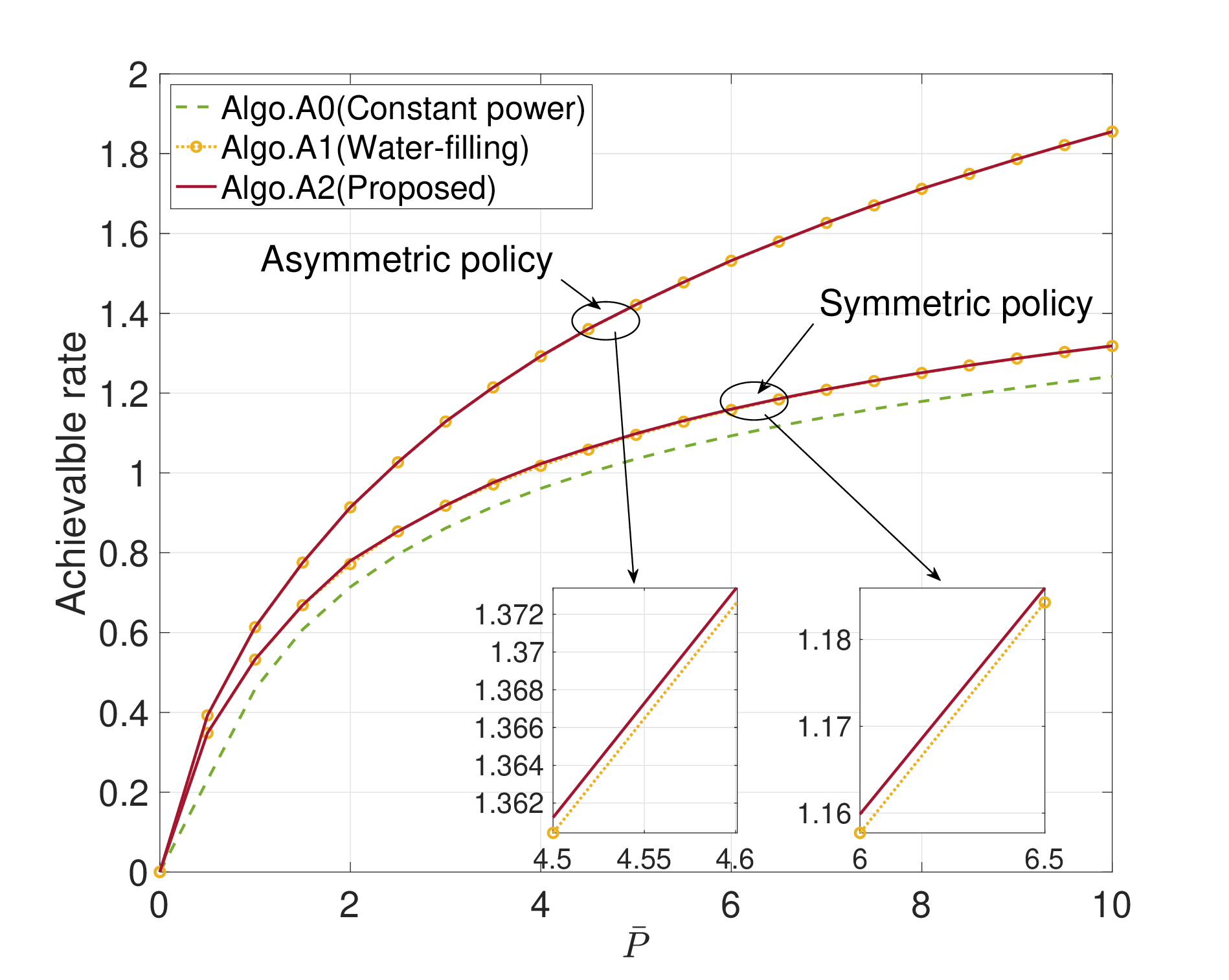}
    \caption{\small Achievable rate against the power constraint in Example 3.}
    \label{fig_power_alloc_6x6}
\end{figure}

\section{Continuous case}

Now we turn our attention to the case when the channel coefficients are modeled as continuous random variables. In this case, we do not have an active set of finite size as in the discrete case so the results from the previous section do not extend to the continuous case directly. However, by looking at the active ``support" of the channel states, similar results could be established. Also notice that in practice, channel coefficients are always quantized so that we only need to deal with a finite number of possible channel states. In this sense, the results from the discrete case are more relevant. Nevertheless, we state a few results for the symmetric power policy in the continuous case.

\subsection{Problem statement}
Assume $h_1,\ldots,h_L$ are continuously distributed and independent with each other. Then the target problem in the symmetric power policy case is given by $CP1$.
\begin{equation}\label{primal_problem_continuous}
    \begin{split}
        CP1:\quad\max_{P(\textbf{h})} \quad & \int_{\mathcal{D}(\textbf{h})} g^+(\textbf{h},P(\textbf{h})) f(\textbf{h}) d\textbf{h},\\
        s.t. \quad & \int_{\mathcal{D}(\textbf{h})} P(\textbf{h}) f(\textbf{h}) d\textbf{h} \leq \bar P,\\
        & P(\textbf{h}) \geq 0,
    \end{split}
\end{equation}
where $g(\textbf{h},x) = \frac{1}{2}\log\frac{1+||\textbf{h}||^2 x}{ ||\textbf{a}||^2 + \epsilon(\textbf{h}) x}$, $\epsilon(\textbf{h}) = ||\textbf{h}||^2||\textbf{a}||^2-(\textbf{h}^T\textbf{a})^2$, $\mathcal{D}(\textbf{h})$ is the support domain of the channel state vector and $f(\textbf{h})$ is the probability density function (PDF) of $\textbf{h}$. Similar to the discrete case, we can define the active domain $\mathcal{D}_A$ as $\mathcal{D}_A = \{ \textbf{h}\in\mathcal{D}(\textbf{h}) \big| P(\textbf{h})>0\}$, where $P(\textbf{h})$ is any feasible power policy. Furthermore, the optimal active domain $\mathcal{D}_A^*$ corresponding to any optimal power policy $P^*(\textbf{h})$ is defined as $\mathcal{D}_A^* = \{ \textbf{h}\in\mathcal{D}(\textbf{h}) \big| P^*(\textbf{h})>0\}$.
If $\mathcal{D}_A^*$ is known, straightforwardly extending Lemma~\ref{lem_simple_prob} to the continuous case, we can simplify $CP1$ to $CP2$ with its support domain $\mathcal{D}_S = \mathcal{D}_A^*$. 

\begin{equation}\label{primal_problem_convex_continuous}
    \begin{split}
        CP2:\quad\max_{P(\textbf{h})} \quad & \int_{\mathcal{D}_S} g(\textbf{h},P(\textbf{h})) f(\textbf{h}) d\textbf{h},\\
        s.t. \quad & \int_{\mathcal{D}_S} P(\textbf{h}) f(\textbf{h}) d\textbf{h} \leq \bar P,\\
        & P(\textbf{h}) \geq 0.
    \end{split}
\end{equation}

Note that $CP2$ is a convex problem, and its optimal solution can also be given by KKT conditions, which can be proven in a similar way as Lemma~\ref{lem_optimal_solution_simple}. Specifically, let us define 
\begin{equation}\label{eq_p_star_ct}
    P(\textbf{h},\mu) = \Bigg\{\begin{array}{ll}
        \left(\frac{1}{\mu}-\frac{1}{||\textbf{h}||^2}\right)^+, & \textbf{h} \text{ is collinear with } \textbf{a} \\
        \left(\frac{ - b(\textbf{h}) + \sqrt{b(\textbf{h})^2 - 4 d(\textbf{h}) c(\textbf{h},\mu)}}{2d(\textbf{h})}\right)^+, & \textbf{h} \text{ is not collinear with } \textbf{a}
        \end{array},
\end{equation}
where $d(\textbf{h}) = ||\textbf{h}||^2 \epsilon(\textbf{h})$, $b(\textbf{h}) = ||\textbf{h}||^2||\textbf{a}||^2 + \epsilon(\textbf{h})$, $c(\textbf{h},\mu) = ||\textbf{a}||^2 - (\textbf{h}^T\textbf{a})^2/\mu$ and $\mu>0$. Then the optimal solution to $CP2$ can be given by
\begin{equation}\label{eq_CP2_solution}
    P^\star(\textbf{h}) = P(\textbf{h},\mu^*),
\end{equation}
with
\begin{equation}\label{eq_CP2_constraint}
    \int_{\mathcal{D}_S} P(\textbf{h},\mu^*) f(\textbf{h}) d\textbf{h} = \bar P.
\end{equation}
Practically, bisection algorithm can be used to derive the optimal $\mu^*$, and then the optimal solution is given by $P(\textbf{h},\mu^*)$. This is presented in Algorithm~\ref{algo_optimal_simple_c}.

\renewcommand{\thealgorithm}{\arabic{algorithm}}
\setcounter{algorithm}{1}

\begin{algorithm}
\caption{Optimal algorithm for $CP2$}\label{algo_optimal_simple_c}
\begin{algorithmic}[1]
\scriptsize
\State Define $q(\mu) = \int_{\mathcal{D}_S} P(\textbf{h},\mu) f(\textbf{h}) d\textbf{h}$;
\State Choose a small enough $\mu_l$ such that $q(\mu_l) > \bar P$;
\State Choose a large enough $\mu_u$ such that $q(\mu_u) < \bar P$;
\State $\mu \gets \frac{\mu_l+\mu_u}{2}$;
\While{$\bar P-q(\mu)>\epsilon \;\big|\big|\; \bar P - q(\mu)<0$} \Comment{$\epsilon$ is a small number, e.g. $10^{-3}$}
\If{$\bar P-q(\mu)>0$}
    \State $\mu_u \gets \mu$;
\Else
    \State $\mu_l \gets \mu$;
\EndIf
\State $\mu \gets \frac{\mu_l+\mu_u}{2}$;
\EndWhile
\State $P^*(\textbf{h}) = P(\textbf{h},\mu)$.
\end{algorithmic}
\end{algorithm}

To study the property of the optimal active domain in the continuous case, we define the good domain $\mathcal{D}_G$ and the bad domain $\mathcal{D}_B$ as $\mathcal{D}_G = \{ \textbf{h}\in\mathcal{D}(\textbf{h}) \big| ||\textbf{h}||^2 > \epsilon(\textbf{h})\}$ and $\mathcal{D}_B = \{ \textbf{h}\in\mathcal{D}(\textbf{h}) \big| ||\textbf{h}||^2 \leq \epsilon(\textbf{h})\}$ respectively. Using the same argument as in the discrete case, it is easy to see that the optimal domain does not contain $\mathcal{D}_B$, that is $\mathcal{D}_A^* \cap \mathcal{D}_B = \emptyset$. It is tempting to conjecture that similar to the discrete case (Theorem~\ref{thm_opt_large_p}), there exists a threshold on the power constraint, above which the optimal domain $\mathcal{D}_A^*$ is equal the good domain. However, we now show that this is not always the case in the continuous case.

Let us assume $\mathcal{D}_A^* = \mathcal{D}_G$. Then the optimal solution can be given by Algorithm~\ref{algo_optimal_simple_c} and its analytical form is given by $P(\textbf{h},\mu^*)$ in (\ref{eq_p_star_ct}). Using the same argument as in Lemma~\ref{lem_optimal_active_set} for the continuous case, it follows that
\begin{equation}\label{eq_CP2_P_ineq}
    P(\textbf{h},\mu^*) > \frac{||\textbf{a}||^2 -1}{||\textbf{h}||^2-\epsilon(\textbf{h})},\forall \textbf{h} \in \mathcal{D}_G,
\end{equation}
Substituting (\ref{eq_CP2_solution}) into (\ref{eq_CP2_P_ineq}), it holds that
\begin{equation}\label{ineq_mu}
    \frac{1}{\mu^*} > \Bigg\{\begin{array}{ll}
    \frac{||\textbf{a}||^2}{||\textbf{h}||^2}, & \epsilon(\textbf{h}) = 0, \\
    \frac{(\textbf{h}^T\textbf{a})^2}{||\textbf{h}||^2 - \epsilon(\textbf{h})}, & \epsilon(\textbf{h}) \neq 0.
    \end{array}
\end{equation}
This means when $\mathcal{D}_A^* = \mathcal{D}_G$, the above inequality holds for all $\textbf{h} \in \mathcal{D}_G$. However, from the definition of the good domain $\mathcal{D}_G = \{ \textbf{h}\in\mathcal{D}(\textbf{h}) \big| ||\textbf{h}||^2 > \epsilon(\textbf{h})\}$, the inequality in (\ref{ineq_mu}) implies that $1/\mu^*$ cannot be bounded from above if there exists a sequence of $\textbf{h}\in\mathcal{D}_G$ such that $||\textbf{h}||^2$ can approach $\epsilon(\textbf{h})$ asymptotically. On the other hand, an unbounded $1/\mu^*$ implies that the power constraint $\bar P$ cannot be bounded from above as well. This is because of (\ref{eq_CP2_constraint}) and the fact that $P(\textbf{h},\mu)$ in (\ref{eq_p_star_ct}) grows at least linearly with $\sqrt{1/\mu}$. This shows that in general there does not exist an upper bound on the power constraint, above which $\mathcal{D}_A^*$ is equal to $\mathcal{D}_G$ in the continuous case. In the following, we study the power allocation algorithms for the continuous case.

\subsection{Power allocation algorithms for the continuous case}
Unlike the discrete case, we cannot apply Algorithm~\ref{algo_exhaustive_search} (exhaustive search) in the continuous case to derive the optimal solution. However, the algorithms \ref{algo_const_power} and \ref{algo_water_filling} can be straightforwardly extended to the continuous case. Let us keep their captions as in the discrete case. Moreover, the proposed ordering methods can also be extended to the continuous case with the ordering metric defined by $O^{(1)} = ||\textbf{h}||^2-(||\textbf{h}||^2||\textbf{a}||^2-(\textbf{h}^T\textbf{a})^2)$ and $O^{(2)} = \frac{||\textbf{h}||^2}{||\textbf{h}||^2||\textbf{a}||^2-(\textbf{h}^T\textbf{a})^2}$. Unlike the discrete case, the channel states in the continuous case cannot be ordered into a finitely countable set. However, the intuition from the discrete case still carries over, namely, the larger the ordering criterion is, the better the channel is. So we will search the optimal active domain in the direction depending on the ordering criterion. In each iteration, the active domain will be shaped further along the direction where the ordering criterion is larger. This intuition is substantiated in Algorithm~\ref{algo_iter_order_c}.

\renewcommand{\thealgorithm}{A\arabic{algorithm}}
\setcounter{algorithm}{3}

\begin{algorithm}
\caption{Proposed iterative algorithm for $CP1$}\label{algo_iter_order_c}
\begin{algorithmic}[1]
\scriptsize
\State Define $r(P(\textbf{h})) = \int_{\mathcal{D}_S} g(\textbf{h},P(\textbf{h})) f(\textbf{h}) d\textbf{h}$;
\State $C_t = 0$, $s = 0.1$; \Comment{$C_t$ is the shaping parameter, $s$ is the step size}
\State $P(\textbf{h}) = P^\prime(\textbf{h}) = 0$;
\While{$r(P^\prime(\textbf{h})) - r(P(\textbf{h})) > \epsilon$} \Comment{$\epsilon$ is a small number, e.g. $10^{-3}$}
\State $\mathcal{D}_S \gets \{\textbf{h} \big| O(\textbf{h})>C_t\}$; \Comment{$O(\textbf{h})$ is the ordering method, it can be either $O^{(1)}$ or $O^{(2)}$ in this case}
\State $P(\textbf{h}) \gets P^\prime(\textbf{h})$;
\State Run Algorithm~\ref{algo_optimal_simple_c} $\rightarrow$ Solution $P^\prime(\textbf{h})$;
\State $C_{t} = C_{t} + s$;
\EndWhile
\State $P^*(\textbf{h}) = P(\textbf{h})$.
\end{algorithmic}
\end{algorithm}


\subsection{Numerical results for the continuous case}
We consider a two-user case where $L=2$ and $\textbf{a} = (1\; 1)^T$. Assume the channel state vector follow the Gaussian distribution with zero mean and unit variance. Since we assume non-negative real channel coefficients, the PDF is given by
\begin{equation}
    f(h_1,h_2) = \frac{2}{\pi} e^{-\frac{h_1^2+h_2^2}{2}}.
\end{equation}
Fig.~\ref{fig_power_alloc_ct} shows the achievable computation rate under Algorithm~\ref{algo_const_power}, \ref{algo_water_filling} and \ref{algo_iter_order_c} with $O^{(1)}$. It can be observed that Algorithm~\ref{algo_const_power} gives the worst expected rate, while the proposed iterative algorithm~\ref{algo_iter_order_c} has the best rate. However, it only outperforms Algorithm~\ref{algo_water_filling} with a small gain.

\begin{figure}[t]
    \centering
    \includegraphics[width = 0.7\linewidth]{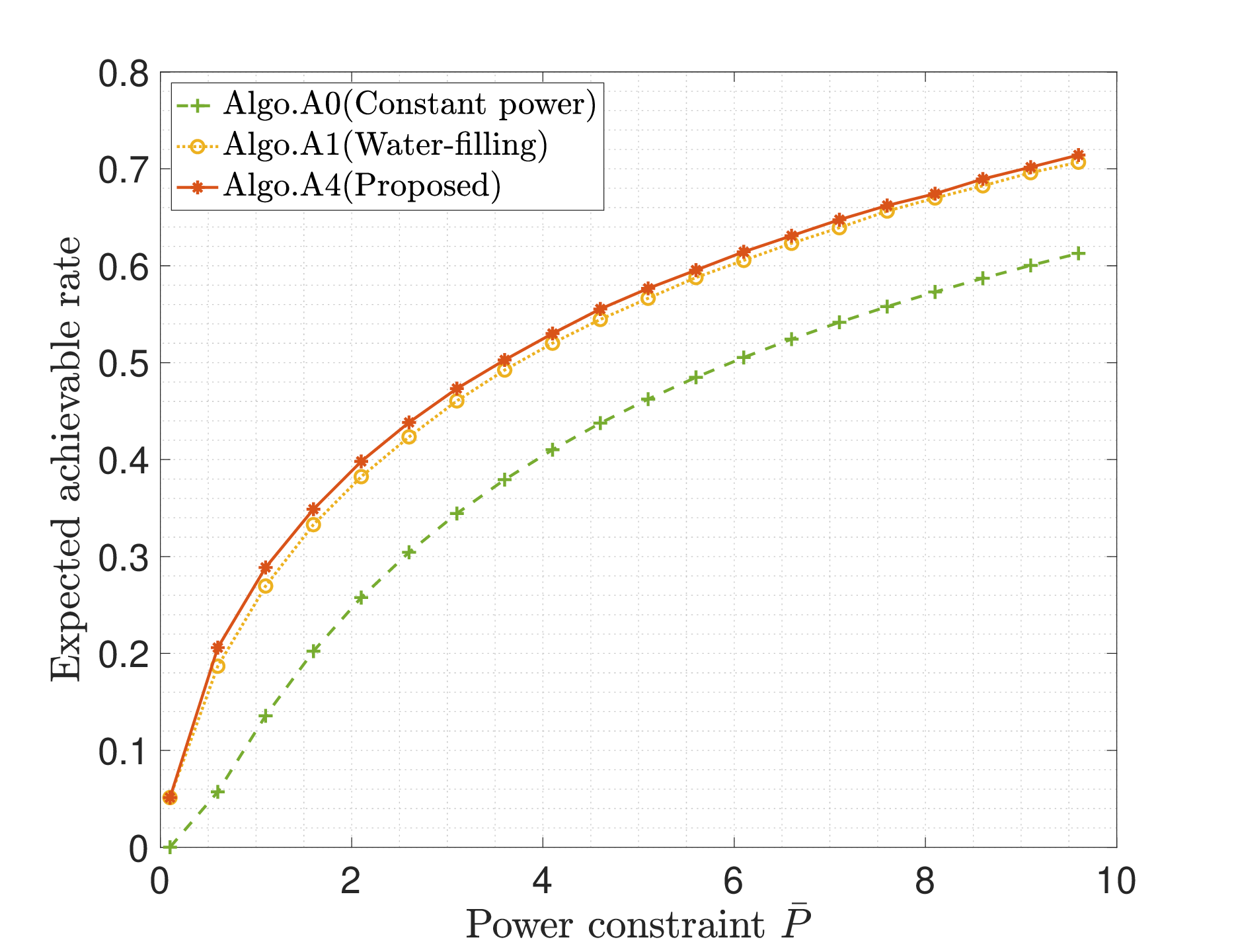}
    \caption{Performance comparison for different algorithms in the continuous case}
    \label{fig_power_alloc_ct}
\end{figure}

\section{Conclusion}

The power allocation problem for CF over fading channels is nontrivial due to its non-convexity and the $\log^+(\cdot)$ function in the rate expression. We investigate the properties of the optimal solutions. We notice that the classical water-filling like algorithm provides a sub-optimal solution with an implicit ordering methods for the channel states. To improve that, we propose two explicit ordering methods and an order-based iterative algorithm to determine the active set and find out the corresponding power policy. The proposed iterative algorithm gives a better rate than the classical algorithm while remains a linear complexity. It achieves optimality in certain examples, but in general gives good but suboptimal solutions. This possibly implies that the two ordering methods are not able to capture all the properties of the channel states. One way to improve the proposed algorithm is to find out a more judicious ordering method and feed it into the iterative algorithm frame. A potential ordering method is to combine the two proposed methods with different weights. In this case, the future research may focus on how to decide the weights. Moreover, the power constraint could be included in the ordering methods as well. It is possible that at different power constraints the ordering of the channel states are different.


\bibliographystyle{IEEEtran}
\bibliography{reference,IEEEabrv}

\end{document}